\newtheorem{theorem}{Theorem}[section]
\newtheorem{proposition}[theorem]{Proposition}
\newtheorem{remark}[theorem]{Remark}
\newtheorem{lemma}[theorem]{Lemma}
\numberwithin{equation}{section}
\newcommand\R{{\ensuremath {\mathbb R} }}
\newcommand\1{{\ensuremath {\mathds 1} }}
\newcommand\nn{\nonumber}
\renewcommand\phi{\varphi}
\newcommand{\gH}{\mathfrak{H}}
\newcommand{\cM}{\mathcal{M}}
\newcommand{\cE}{\mathcal{E}}
\newcommand{\eps}{\epsilon}
\renewcommand{\epsilon}{\varepsilon}
\newcommand{\norm}[1]{ \left| \! \left| #1 \right| \! \right| }
\DeclareMathOperator{\Tr}{{\rm Tr}}
\renewcommand{\ge}{\geqslant}
\renewcommand{\le}{\leqslant}
\renewcommand{\geq}{\geqslant}
\renewcommand{\leq}{\leqslant}
\renewcommand{\hat}{\widehat}
\renewcommand{\tilde}{\widetilde}
\newcommand{\wh}{\widetilde h}
\newcommand{\wH}{\widetilde H}
\newcommand{\wPsi}{\widetilde \Psi}
\newcommand{\wmu}{\widetilde\mu}
\newcommand{\PL}{P_\Lambda}
\newcommand{\black}[1]{\textcolor{black}{#1}}
\title[The Gross-Pitaevskii limit revisited]{Ground states of large bosonic systems: The Gross-Pitaevskii limit revisited}
\author[P.~T. Nam]{Phan Th\`anh Nam}
\address{IST Austria, Am Campus 1, 3400 Klosterneuburg, Austria} 
\email{pnam@ist.ac.at}
\author[N. Rougerie]{Nicolas Rougerie}
\address{Universit\'e Grenoble 1 \& CNRS,  LPMMC (UMR 5493), B.P. 166, F-38 042 Grenoble, France}
\email{nicolas.rougerie@grenoble.cnrs.fr}
\author[R. Seiringer]{Robert Seiringer}
\address{IST Austria, Am Campus 1, 3400 Klosterneuburg, Austria} 
\email{robert.seiringer@ist.ac.at}
\begin{document}
\date{March 30th, 2015}

\begin{abstract}
We study the ground state of a dilute Bose gas in a scaling limit where the Gross-Pitaevskii functional emerges. This is a repulsive non-linear Schr\"odinger functional whose quartic term is proportional to the scattering length of the interparticle interaction potential. We propose a new derivation of this limit problem, with a method that bypasses some of the technical difficulties that previous derivations had to face. The new method is based on a combination of Dyson's lemma, the quantum de Finetti theorem and a second moment estimate for ground states of the effective Dyson Hamiltonian. It applies equally well to the case where magnetic fields or rotation are present.  
\end{abstract}

\maketitle

\setcounter{tocdepth}{2}
\tableofcontents

%%%%%%%%%%%%%%%%%%%%%%%%%%%%%%%%%%%%%%%%%%
%%%%%%%%%%%%%%%%%%%%%%%%%%%%%%%%%%%%%%%%%%
\section{Introduction}
%%%%%%%%%%%%%%%%%%%%%%%%%%%%%%%%%%%%%%%%%%
%%%%%%%%%%%%%%%%%%%%%%%%%%%%%%%%%%%%%%%%%%

The rigorous derivation of effective non-linear theories from many-body quantum mechanics has been studied extensively in recent years, motivated in part by experiments in cold atom physics. For bosons, the emergence of the limit theories can be interpreted as due to most of the particles occupying the same quantum state: this is the Bose-Einstein condensation phenomenon, observed first in dilute alkali vapors some twenty years ago. 

The parameter regime most relevant for the description of the actual physical setup is the Gross-Pitaevskii limit. It is also the most mathematically demanding regime considered in the literature so far, see~\cite{LieSeiYng-00,LieSei-02,LieSei-06} for the derivation of equilibrium states and~\cite{ErdYauSch-10,ErdSchYau-09,BenOliSch-12,Pickl-15} for dynamics (more extensive lists of references may be found in~\cite{LieSeiSolYng-05,rougerie-cdf,BenPorSch-15}). The main reason for this sophistication is the fact that interparticle correlations due to two-body scattering play a leading order role in this regime. The goal of this paper is to present a method for the derivation of Gross-Pitaevskii theory at the level of the ground state that is conceptually and technically simpler than existing proofs, in particular that of~\cite{LieSei-06} which was so far the only method applicable when an external magnetic field is present. 

Our setting is as follows: we consider $N$ interacting bosons in the three-dimensional space $\R ^3$, described by the many-body Schr\"odinger Hamiltonian
\begin{equation}\label{eq:start hamil}
 H_N =\sum_{j=1}^N h_j  +\sum_{1\le j<k \le N} w_N(x_j-x_k) 
\end{equation}
acting on the  space $\gH^N = \bigotimes_{\rm sym}^N L^2(\R^3)$ of permutation-symmetric square integrable functions. The one-body operator is given by 
$$ h:= (-i\nabla + A(x))^2+V(x)$$
with a magnetic (or a rotation) field  $A$ satisfying 
\begin{align} \label{eq:assumption-A}
A\in L^3_{\rm loc}(\R^3,\R^3),\quad \lim_{|x|\to \infty} |A(x)| e^{-b|x|}=0
\end{align}
for some constant $b>0$ and an external potential $V$ satisfying 
\begin{align} \label{eq:assumption-V}
0\le V\in L^1_{\rm loc}(\R^3),  \quad \lim_{|x|\to \infty} V(x)=+\infty. 
\end{align}
We thus consider non-relativistic particles in a trapping potential, possibly under the influence of an effective magnetic field, which might be due to rotation of the sample or the interaction with optical fields. 

The particles interact pairwise via a repulsive potential $w_N$ given by
\begin{align} \label{eq:assumption-w}
w_N(x)=N^2 w(Nx), 
\end{align} 
where $w$ is a fixed function which is non-negative, radial and of finite range, i.e., $\1(|x|>R_0)w(x)  \equiv 0$ for some constant $R_0>0$. Different scalings of the interaction potential have been considered in the literature, of the form
\begin{equation}\label{eq:wbeta} 
w_{\beta,N} (x) = \frac{1}{N} N ^{3\beta} w (N ^{\beta} x)
\end{equation}
with $0 \leq \beta \leq 1$. The $N ^{-1}$ prefactor makes the interaction energy in~\eqref{eq:start hamil} of the same order as the one-particle energy. Indeed, if $\beta>0$, then
\begin{equation}\label{eq:w delta}
 N ^{3\beta} w (N ^{\beta} x) \underset{N\to \infty}{\to} \left(\int w\right) \delta_0 
\end{equation}
weakly and thus the interaction potential $w_{\beta,N}$ should be thought of as leading to a bounded interaction energy per pair of particles. Generally speaking, the larger the parameter $\beta$, the faster the potential converges to a point interaction, and thus the harder the analysis. Note that the cases $\beta < 1/3$ and $\beta > 1/3$  correspond to two physically rather different  scenarios: in the former the range of the potential is much larger than the typical interparticle distance $N ^{-1/3}$ and we should expect many weak collisions; while in the latter we rather have very few but very strong collisions. In this paper we consider the most interesting case $\beta = 1$ where the naive approximation \eqref{eq:w delta} does {\em not} capture the leading order behavior of the physical system. In fact, the strong correlations at short distances $O(N^{-1})$ yield a nonlinear correction, which essentially amounts to replacing the coupling constant $\int w$ by $(8\pi) \times $ the scattering 
length of $w$.

Let us quickly recall the definition of the scattering length; a more complete discussion can be found in \cite[Appendix~C]{LieSeiSolYng-05}. Under our assumption on $w$, the zero-energy scattering equation 
$$ (-2\Delta+w(x))f(x), \quad \lim_{|x|\to \infty}f(x)=1,$$
has a unique solution and it satisfies
$$ f(x)=1-\frac{a}{|x|},~~\forall |x|>R_0$$
for some constant $a \ge 0$ which is called the {\em scattering length} of $w$. In particular, if $w$ is the potential for hard spheres, namely $w(x)\equiv \black{+\infty}$ when $|x|<R_0$ and $w(x)\equiv \black{0}$ when $|x|\ge R_0$, then the scattering length of $w$ is exactly $R_0$. In a dilute gas, the scattering length can be interpreted as an effective range of the interaction: a quantum particle far from the others is felt by them as a hard sphere of radius $a$. A useful variational characterization of $a$ is as follows:
\begin{equation}\label{eq:var scat}
8\pi a = \inf\left\lbrace \int_{\R ^3} 2 |\nabla f| ^2 + w |f| ^2, \quad \lim_{|x|\to \infty}f(x)=1 \right\rbrace. 
\end{equation}
Consequently, $8\pi a$ is smaller than $\int w$ (the strict inequality can be seen by taking the trial function $1-\lambda g$ with $g\in C^2_c(\R^3,\R)$ satisfying $g(x)\equiv 1$ when $|x|<R_0$, and $\lambda>0$ sufficiently small). Moreover, a simple scaling shows that the scattering length of $w_N=N^2 w(N.)$ is $a/N$. 

We are going to prove that the ground state energy and ground states of $H_N$ converge to those of the Gross-Pitaevskii functional 
\begin{equation}\label{eq:GP func}
 \cE_{\rm GP}(u):=\langle u, h u \rangle + 4\pi a \int_{\R^3} |u(x)|^4 dx 
\end{equation}
in a suitable sense. 
Note that the occurrence of the scattering length in \eqref{eq:GP func} is subtle: this functional is \emph{not} obtained by testing $H_N$ with factorized states of the form $u ^{\otimes N}$ (which would lead to a functional with $4\pi a$ replaced by $(1/2) \int w$). Taking into account the short-range correlation structure which gives rise to~\eqref{eq:GP func} is the main difficulty in the proof of the following theorem, which is our main result. 

\begin{theorem}[\textbf{Derivation of the Gross-Pitaevskii functional}]\label{thm:cv-GP}\mbox{}\\
Under conditions \eqref{eq:assumption-A}, \eqref{eq:assumption-V} and \eqref{eq:assumption-w}, we have
\begin{align} \label{eq:cv-energy}
\boxed{\lim_{N\to \infty} \inf_{\|\Psi\|_{\gH^N}=1} \frac{\langle \Psi, H_N \Psi\rangle}{N} = \inf_{\|u\|_{L^2(\R^3)}=1} \cE_{\rm GP}(u)=:e_{\rm GP}.} 
\end{align}
Moreover, if $\Psi_N$ is an approximate ground state for $H_N$, namely
$$
\lim_{N\to \infty}  \frac{\langle \Psi_N, H_N \Psi_N\rangle}{N} = e_{\rm GP},
$$
then there exists a subsequence $\Psi_{N_\ell}$ and a Borel probability measure $\mu$ supported on the set of minimizers of $\cE_{\rm GP}(u)$ such that 
\begin{align} \label{eq:cv-state}
\boxed{\lim_{\ell\to \infty}\Tr \left| \gamma_{\Psi_{N_\ell}}^{(k)} - \int |u^{\otimes k} \rangle \langle u^{\otimes k}| d\mu(u) \right| =0,\quad \forall k\in \mathbb{N}}
\end{align} 
where $\gamma_{\Psi_{N}}^{(k)}= \Tr_{k+1\to N} |\Psi_{N} \rangle \langle \Psi_{N}|$ is the $k$-particle  reduced density matrix of $\Psi_{N}$. In particular, if $\cE_{\rm GP}(u)$ subject to $\|u\|_{L^2}= 1$ has a unique minimizer $u_0$ has a unique minimizer $u_0$ (up to a complex phase), then there is complete Bose-Einstein condensation
\begin{align} \label{eq:BEC}
\boxed{\lim_{N\to \infty}\Tr \left| \gamma_{\Psi_{N}}^{(k)} - |u_0^{\otimes k} \rangle \langle u_0^{\otimes k}| \right| =0,\quad \forall k\in \mathbb{N}.}
\end{align}
\end{theorem}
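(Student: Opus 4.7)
My plan is to treat the upper and lower bounds of \eqref{eq:cv-energy} separately, and then deduce the structural statements \eqref{eq:cv-state}--\eqref{eq:BEC} from the lower bound. The upper bound is essentially classical and I would use a Jastrow--Dyson trial state: for a smooth minimizer $u$ of $\cE_{\rm GP}$ (which exists under \eqref{eq:assumption-V}), take
\begin{equation*}
\Psi_N^{\rm trial}(x_1,\dots,x_N)=C_N\, u^{\otimes N}(x_1,\dots,x_N)\prod_{1\le i<j\le N}f_N(x_i-x_j),
\end{equation*}
where $f_N(x)=f(Nx)$ is built from a suitably truncated zero-energy scattering solution of $-2\Delta f+wf=0$. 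A direct computation, invoking the variational characterization \eqref{eq:var scat} to capture the quartic coefficient $4\pi a$ (rather than the naive $\tfrac12\int w$), yields $N^{-1}\pscal{\Psi_N^{\rm trial}, H_N\Psi_N^{\rm trial}}\le e_{\rm GP}+o(1)$.

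The substance of the proof is the matching lower bound. The first step is Dyson's lemma with a mesoscopic radius $R$ satisfying $N^{-1}\ll R\ll 1$: sacrificing a fraction $\epsilon$ of the kinetic energy, I would derive an operator inequality
\begin{equation*}
H_N\;\ge\;(1-\epsilon)\sum_{j=1}^N h_j\;+\;\sum_{1\le j<k\le N}U_R(x_j-x_k)\;-\;o(N),
\end{equation*}
in which $U_R$ is a bounded, compactly supported two-body potential calibrated so that the interaction energy on product states $u^{\otimes N}$ converges to $4\pi a\int|u|^4$ as $R\to 0$. The scattering length has thus been fed into an effective bounded interaction on the mesoscale $R$, much larger than the original scale $N^{-1}$. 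The magnetic part of $h$ causes no difficulty since Dyson's lemma only requires a diamagnetic-type comparison between two-body kinetic and potential energies.

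The second ingredient is the quantum de Finetti theorem. For any sequence of approximate ground states $\Psi_N$, the reduced density matrices $\gamma^{(k)}_{\Psi_N}$ are, up to extraction of a subsequence, trace-norm convergent (against compact test operators) to $\int |u^{\otimes k}\rangle\langle u^{\otimes k}|\,d\mu(u)$ for some Borel probability measure $\mu$ on the unit ball of $L^2(\R^3)$; the confinement hypothesis \eqref{eq:assumption-V} then forces $\mu$ to be supported on the unit sphere. Passing to the limit in the $U_R$ interaction, which is bounded but not a compact operator on two particles, is the technical heart of the argument, and this is where the \emph{second moment estimate} enters: using positivity of the interaction together with coercivity of $h$, I would establish $\pscal{\Psi_N,\big(\sum_j h_j\big)^2\Psi_N}\le CN^2$, equivalently $\Tr[(h\otimes h)\gamma^{(2)}_{\Psi_N}]\le C$. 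This supplies the tightness required to legitimately combine de Finetti's representation with the bounded but spatially extended potential $U_R$.

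Assembling all three ingredients yields
\begin{equation*}
\liminf_{N\to\infty}N^{-1}\pscal{\Psi_N,H_N\Psi_N}\;\ge\;(1-\epsilon)\int\pscal{u,hu}\,d\mu(u)\;+\;\tfrac12\int\left(\int_{\R^3}\!\!\int_{\R^3}U_R(x-y)|u(x)|^2|u(y)|^2\,dx\,dy\right)d\mu(u).
\end{equation*}
Sending $R\to 0$ converts the second term into $4\pi a\int|u|^4$ (for $\mu$-a.e.\ $u$, using its $H^1$-regularity), and then $\epsilon\to 0$ produces $\liminf\ge\int\cE_{\rm GP}(u)\,d\mu(u)\ge e_{\rm GP}$. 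Together with the upper bound, this proves \eqref{eq:cv-energy} and forces $\mu$ to be concentrated on minimizers of $\cE_{\rm GP}$; \eqref{eq:cv-state} is then immediate from the de Finetti representation, and \eqref{eq:BEC} is its specialization to a unique minimizer. In my view the principal obstacle is the second moment estimate itself together with the delicate interchange of limits $N\to\infty$, $R\to 0$, $\epsilon\to 0$: one must first justify $N\to\infty$ with $R$ fixed using the bounded regime, then extract enough regularity from $\mu$ to recover the scattering-length coefficient in the limit $R\to 0$.
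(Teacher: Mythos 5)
The central gap is your second moment estimate. You propose to prove $\langle \Psi_N,(\sum_j h_j)^2\Psi_N\rangle\le CN^2$, i.e.\ $\Tr[(h\otimes h)\,\gamma^{(2)}_{\Psi_N}]\le C$, for approximate ground states of the \emph{original} Hamiltonian $H_N$, "using positivity of the interaction together with coercivity of $h$". Positivity of $w_N$ does not help here: expanding $(\sum_j h_j + \sum w_N)^2$ produces the non-sign-definite cross terms $h_i\, w_N(x_i-x_j)+w_N(x_i-x_j)\,h_i$, and any bound of the type \eqref{eq:W3} costs $\|w_N\|_{L^2}\sim N^{1/2}$, which ruins the estimate. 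Worse, the bound itself is expected to fail at scale $R\sim N^{-1}$: a state carrying the physical correlation structure $\prod_{i<j}f(N(x_i-x_j))$ has $\Tr[(p^2\otimes p^2)\gamma^{(2)}]\gtrsim N$, because two gradients hitting the Jastrow factor give $N^2(\nabla\nabla f)(N\cdot)$, whose squared $L^2$ norm on the ball of radius $O(N^{-1})$ is of order $N$. This is precisely why the paper applies Dyson's lemma \emph{first} and proves the second moment estimate \eqref{eq:second-moment} (Lemma \ref{lem:a-priori-estimate}) only for ground states of the effective Hamiltonian $\wH_N$, whose interaction $U_R$ lives on a scale $R\gg N^{-2/3}$ where $\|U_R\|_{L^2}\sim R^{-3/2}\ll N$ makes the cross terms controllable. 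Since your lower bound is then no longer expressed through the density matrices of $\Psi_N$ itself but through those of a different (Dyson-Hamiltonian) ground state, the convergence of states \eqref{eq:cv-state} is also not "immediate"; the paper needs the Hellmann--Feynman perturbation argument of Section \ref{sec:GS} to transfer the information back to $\Psi_N$.

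Two further steps are misstated. Dyson's lemma does not give the two-body inequality you write: the soft potential it produces is the nearest-neighbour potential $W_N$ of \eqref{eq:WN}, with the cutoff $\prod_{k\ne i,j}\theta_{2R}(x_j-x_k)$, and removing that cutoff is the three-body estimate of Lemma \ref{lem:3-body}, which again rests on the second moment bound and on the choice $R\ll N^{-1/2}$. Moreover the lemma consumes the high-momentum part $p^2\theta_s(p)$ of the kinetic energy (not just a fraction $\eps$ of $h$) and leaves an error $CN^2R^2s^5/\eps$, so your proposed order of limits --- "first $N\to\infty$ with $R$ fixed, then $R\to0$" --- is not available: the error term forces $R=R(N)\ll N^{-1/2}$, i.e.\ one is stuck in the dilute regime where justifying the mean-field step is exactly the difficulty the moment estimate is designed to overcome. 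Finally, the magnetic field is not handled by a "diamagnetic-type comparison": the constant $\kappa_{\eps,s}$ is finite only under the extra condition \eqref{eq:assumption-A-V}, which must be removed afterwards by the binding/concavity argument of Section \ref{sec:GP-energy}, since discarding $A$ diamagnetically would lose the $A$-dependence of the limiting functional. Your upper bound sketch and the deduction of \eqref{eq:BEC} from \eqref{eq:cv-state} are fine.
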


The energy upper bound in \eqref{eq:cv-energy} was proved in \cite{LieSeiYng-00,Sei-03} (see also~\cite[Appendix~A]{BenPorSch-15} for an alternative approach). The energy lower bound in \eqref{eq:cv-energy} and the convergence of one-particle density matrices were proved in \cite{LieSei-06}. The simpler case $A\equiv 0$ had been treated before in~\cite{LieSeiYng-00} (ground state energy) and \cite{LieSei-02} (condensation). In this case, the uniqueness of the Gross-Pitaevskii minimizer $u_0$ follow from a simple convexity argument. The result in Theorem \ref{thm:cv-GP} is thus not new, but the existing proofs are fairly difficult, in particular that of~\cite{LieSei-06} which deals with the case $A\not \equiv 0$. 

In the present paper we will provide alternative proofs of the energy lower bound and the convergence of states using the quantum de Finetti theorem in the same spirit as in~\cite{LewNamRou-14,LewNamRou-14c}. Our proofs are conceptually and technically simpler than those provided in~\cite{LieSei-06}. The overall strategy will be explained in the next section.

Our result covers the case of a rotating gas where the minimizers of the GP functional can develop quantized vortices. This corresponds to taking $A(x)=\Omega \wedge x$ with $\Omega$ being the angular velocity vector. In this case, $V$ should be interpreted as the trapping potential minus $\frac{1}{2}(\Omega \wedge x)^2$. The assumption $V(x)\to \infty$ as $|x|\to \infty$ is to ensure that all particles are confined to the system. Here our conditions on $A$ and $V$ are slightly more general than those of \cite{LieSei-06} where $A$ is assumed to grow at most polynomially and $V$ is assumed to grow at least logarithmically.

%Actually, the exponential upper bound \eqref{eq:assumption-A} can be replaced by an even more general condition, see Remark \ref{rmk:growth-A}. 

The finite range assumption on $w$ is not a serious restriction because we can always restrict the support of $w$ to a finite ball without changing the scattering length significantly.  In fact, it is sufficient to assume that $w$ is integrable at infinity, in which case the scattering length is well-defined. We can also work with a more general interaction $w_N \ge 0$ (with scattering length $a_N$) rather than the specific choice   \eqref{eq:assumption-w}, as long as its range goes to zero and  $\lim_{N\to \infty} Na_N$ exists; then the result in Theorem \ref{thm:cv-GP} still holds with $a$ replaced by $\lim_{N\to \infty} Na_N$. In particular, if $w_N$ is chosen as in~\eqref{eq:w delta} for some $0<\beta<1$, then $Na_N \to (8\pi)^{-1}\int w$. The critical case $\beta=1$ considered in this paper is much more interesting because in the limit the true scattering length appears instead of its first order Born approximation $(8\pi)^{-1}\int w$.

\medskip

\noindent \textbf{Acknowledgment:} We thank Mathieu Lewin for helpful discussions. PTN is supported partially by the European Union's Seventh
Framework Programme under REA grant agreement
no. 291734. NR acknowledges financial support from the ANR (Mathostaq project, ANR-13-JS01-0005-01).

\section{Overall strategy} \label{sec:overall}

In this section we give an outline of the proofs of our main results, in order to better emphasize the key new points for the energy lower bound and the convergence of states.

We shall use the following notation: Let $\theta:\R^3 \to \R$ be a radial smooth Heaviside-like function, i.e. 
$$0\le \theta \le 1, \: \theta(x)\equiv 0 \mbox{ for } |x|\le 1 \mbox{ and } \theta(x)\equiv 1 \mbox{ for } |x|\ge 2.$$
Let $U:\R^3 \to \R$ be a radial smooth function supported on the annulus $1/2\le |x| \le 1$ such that 
$$U(x)\ge 0 \mbox{ and } \int_{\R ^3} U = 4\pi a.$$
For every $R >0$ define
$$ \theta_{R}(x)= \theta\Big(\frac{x}{R}\Big), \quad U_R(x) = \frac{1}{R^3} U\Big(\frac{x}{R}\Big).$$
The smooth cut-off function $\theta_R$ will be used to perform cut-offs in both space and momentum variables, the latter being always denoted by 
$$p=-i\nabla.$$
The potential $U_R$ will be used to replace the original one. The important points will be that the integral of $U_R$ yields the correct physical scattering length, and that we will have some freedom in choosing the range $R$ of $U_R$.

\medskip

\noindent {\bf Step 1 (Dyson's lemma).} The main difficulty in dealing with the GP limit is that an ansatz $u ^{\otimes N}$ does \emph{not} give the correct energy asymptotics. In this regime, correlations between particles \emph{do} matter, and one should rather think of an ansatz of the form 
\begin{equation}\label{eq:GP ansatz}
 \prod_{i=1} ^N u(x_i) \prod_{1\leq i< j \leq N} f (x_i-x_i), 
\end{equation}
or a close variant, where $f$ is linked to the two-body scattering process. 
%In the time-dependent literature, one may find different ways of taking this most crucial fact into account, see~\cite{ErdSchYau-09,ErdYauSch-10,BenOliSch-12,Pickl-15}. 
We shall follow the approach of~\cite{LieSei-06}, relying on a generalization of an idea due to Dyson~\cite{Dyson-57}. The following lemma, proved in~\cite{LieSeiSol-05}, allows to bound our Hamiltonian from below by an effective one which is much less singular, but still encodes the scattering length of the original interaction potential. 

\begin{lemma}[\textbf{Generalized Dyson Lemma}]\label{lem:Dyson}\mbox{}\\
For all  $s>0$, $1>\eps>0$ and $R>2R_0/N$, we have 
\begin{align} \label{eq:Dyson-lemma}
H_N &\ge   \sum_{j=1}^N \Big( h_j - (1-\eps) p_j^2 \theta_s( p_j) \Big)  + \frac{(1-\eps)^2}{N} W_N - C\frac{N^2 R^2 s^5}{\eps}\,,
\end{align}
where
\begin{align}\label{eq:WN} 
\quad W_N := \sum_{i\ne j}^N U_R(x_i-x_j) \prod_{k\ne i,j} \theta_{2R} (x_j-x_k).
\end{align}
\end{lemma}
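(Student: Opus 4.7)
The plan is to follow the Dyson-lemma strategy of \cite{LieSeiSol-05}, refined as in \cite{LieSei-06} so as to accommodate the magnetic field: use only a small fraction of the kinetic energy --- the high-momentum part $(1-\eps) p_j^2 \theta_s(p_j)$ --- to combine, pair by pair, with the singular potential $w_N$ and thereby produce the softer effective interaction $U_R$ with the correct coupling $4\pi a$. The remainder $h_j - (1-\eps) p_j^2 \theta_s(p_j)$ stays intact on the left-hand side, so $h_j$ is only depleted in its high-momentum sector. The key technical choice, compared to earlier versions, is to use smooth momentum cutoffs $\theta_s(p_j)$ rather than sharp spatial partitions: smoothness simplifies the commutator bookkeeping and in particular remains compatible with the magnetic potential $A$.

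The starting point is the two-body Dyson lemma: for $w$ supported in $B_{R_0}$ and any $R>R_0/N$, one has (schematically)
\begin{equation*}
\tfrac{1}{2}\int_{B_R} |\nabla_r \phi(r)|^2\, \dr + \tfrac{1}{2}\int w_N(r)|\phi(r)|^2\, \dr \ge (1-\eps)^2\int U_R(r)|\phi(r)|^2\, \dr - \text{(error)}
\end{equation*}
for $\phi\in H^1(\R^3)$. With the normalization $\int U_R = 4\pi a$ and the scattering length of $w_N=N^2 w(N\cdot)$ equal to $a/N$, this is the mechanism that replaces the singular $w_N$ by the soft $U_R$ with the right coupling constant. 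For the many-body extension I would use a nearest-neighbor-type construction: for each ordered pair $(i,j)$, on the event that $j$ is isolated at scale $2R$ from every $k\ne i,j$ (enforced smoothly by the product $\prod_{k\ne i,j}\theta_{2R}(x_j-x_k)$), the pair $(i,j)$ can be treated as a two-body problem in the relative variable $r=x_i-x_j$ around $x_j$, and the two-body lemma produces $U_R(x_i-x_j)$ times this isolation factor. Summing over ordered pairs and distributing $1/N$ of the kinetic reservoir per partner yields $W_N/N$ with prefactor $(1-\eps)^2$, as claimed.

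The main obstacle is the quantitative error analysis. The momentum cutoff $\theta_s(p_j)$ does not commute with the spatial localizations used in Dyson's lemma, nor with the factors $\theta_{2R}(x_j-x_k)$ appearing in $W_N$. Using symbolic calculus, with $\|\nabla\theta_s\|_\infty\sim s^{-1}$ and $\|\nabla\theta_{2R}\|_\infty\sim R^{-1}$, and absorbing the off-diagonal remainders by Cauchy--Schwarz $|AB^*|\le \eps^{-1}AA^* + \eps B^*B$ --- the $\eps B^*B$ being absorbed precisely into the fraction $\eps$ of the kinetic energy we explicitly set aside in writing $(1-\eps)p_j^2\theta_s(p_j)$ --- produces a pointwise error of order $R^2 s^5/\eps$ per pair, of which there are $O(N^2)$. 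This bookkeeping is the main technical burden of the proof; it is precisely where using smooth momentum cutoffs (rather than sharp ones in position space, as in the original argument) pays off and keeps the final error in the clean form $N^2 R^2 s^5/\eps$.
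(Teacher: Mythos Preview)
Your overall plan---use only the high-momentum part $(1-\eps)p^2\theta_s(p)$ of the kinetic energy, invoke the Dyson-type lemma of \cite{LieSeiSol-05}, and produce the soft potential $U_R$ times a nearest-neighbor factor---is the right one, and your identification of the error scale $R^2 s^5/\eps$ is correct. But the paper's route is considerably shorter than what you outline, and the difference is instructive.

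The paper does \emph{not} localize first and then apply a two-body lemma in the relative variable. Instead it freezes, for each particle $i$, the positions of all others as points $y_j$ and quotes directly the one-particle inequality from \cite[Eqs.~(50)--(52)]{LieSeiSol-05}:
\[
p^2\theta_s(p) + \tfrac12\sum_{j\ne i} w_N(x-y_j) \;\ge\; \frac{1-\eps}{N}\sum_{j\ne i} U_R(x-y_j) - \frac{CaR^2 s^5}{\eps}
\]
valid under the hard separation hypothesis $\min_{j\ne k}|y_j-y_k|\ge 2R$. The commutator bookkeeping you describe (between $\theta_s(p)$ and spatial localizations) is already absorbed into this cited estimate; you do not have to redo it. The crucial step you are missing is how the $\theta_{2R}$ factors enter: since the left side above is \emph{non-negative} regardless of the $y_j$'s, one may simply multiply the right side by $\prod_{k\ne j}\theta_{2R}(y_j-y_k)$ term by term to relax the hard separation condition to a smooth one. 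No commutator between $\theta_s(p_i)$ and $\theta_{2R}(x_j-x_k)$ is ever needed. After this, one multiplies by $(1-\eps)$ and sums over $i$.

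So your ``localize first, then two-body Dyson, then commutator analysis with $\theta_{2R}$'' ordering is workable in principle but unnecessarily heavy; the paper's ``one-particle Dyson under a hard hypothesis, then relax via non-negativity'' ordering bypasses that entire layer of estimates. This is worth internalizing: whenever the left side of a Dyson-type inequality is manifestly non-negative, one can insert smooth cutoffs on the right for free.
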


Here and in the sequel, $C$ stands for a generic positive constant.

\begin{proof} Recall that the scattering length of $w_N$ is $a/N$. Therefore, from \cite[Eq. (50) and the first estimate in (52), with $(v,a,\chi,s)$ replaced by $(w_N,a/N,\theta_s, s^{-1})$, respectively]{LieSeiSol-05} one has
$$ p^2 \theta_s(p) +\frac{1}{2} \sum_{j=1}^{N-1} w_N(x-y_j) \ge \frac{1-\eps}{N}\sum_{j=1}^{N-1} U_R(x-y_j) - \frac{CaR^2 s^5}{\eps} $$
on $L^2(\R^3)$, for all given points $y_j$ satisfying $\min_{j\ne k} |y_j-y_k| \ge 2R$. Since the left side is non-negative, we can relax the condition $\min_{j\ne k} |y_j-y_k| \ge 2R$ by multiplying the right side with $\prod_{k\ne j} \theta_{2R} (y_j-y_k)$. Thus for every $i=1,2,...,N$, 
\begin{align*} & p_i^2 \theta_s(p_i) +\frac{1}{2} \sum_{j\ne i}^{N} w_N(x_i-x_j) \\
&\ge \frac{1-\eps}{N}\sum_{j\ne i} U_R(x_i-x_j) \prod_{k\ne i,j} \theta_{2R}(x_j-x_k)- \frac{Ca R^2 {s^5}}{\eps }.
\end{align*}
Multiplying both sides with $1-\eps$ and summing over $i$ we obtain \eqref{eq:Dyson-lemma}.   
\end{proof}
\begin{proof}[Clarification] The reader should keep in mind that we will choose $R=R(N)\to 0$ (actually  $N^{-1/2}\gg R \gg N^{-2/3}$), then $s\to \infty$ and $\eps\to 0$. 
\end{proof} 

The main point of Dyson's lemma is that we can replace the hard interaction potential $w_N$ by a softer one $U_R$ which encodes the scattering length conveniently as $\int U_R =4\pi a$. The price we have to pay for this advantage is twofold, however: first, we have to use all the high-momentum part of the kinetic energy (note that $\theta_s( p)=1$ when $p\ge 2s$); and second, the new potential $U_R(x_i-x_j)$ comes with the cut-off $\prod_{k\ne i,j} \theta(x_j-x_k)$. Together they really describe a ``nearest neighbor" potential instead of an ordinary two-body potential. While the first problem is not too annoying as the low part of the momentum is sufficient to recover the full energy in the limit, the second problem is much more serious.

\medskip

\noindent{\bf Step 2 (Second moment  estimate).} The lower bound \eqref{eq:Dyson-lemma} leads us to consider the effective Hamiltonian 
\begin{equation}\label{eq:eff hamil}
\wH_N: = \sum_{j=1}^N \wh_j + \frac{(1-\eps)^2}{N} W_N 
\end{equation}
where
\begin{align}\label{eq:htilde}
\wh := h - (1-\eps) p^2 \theta_s(p) - \kappa_{\eps,s}, \quad \kappa_{\eps,s} := \inf \sigma \Big(h - (1-\eps) p^2 \theta_s(p) -1 \Big). 
\end{align}
Here we use the freedom to add and remove the constant $N\kappa_{\eps,s}$ to the Hamiltonian to reduce to the case $\wh\ge 1$. In order to ensure that $\kappa_{\eps,s}$ is finite, we need the  extra condition 
\begin{align}\label{eq:assumption-A-V}
\lim_{|x|\to \infty}\frac{|A(x)|^2}{V(x)}=0\,,
\end{align}
which can be removed at a later stage, as we shall explain below. 

We will now seek a lower bound to the ground state energy of~\eqref{eq:eff hamil}. The philosophy, as in the previous work~\cite{LieSei-06}, is that if $\Psi_N$ is the ground state of the original Hamiltonian, then roughly 
$$\Psi_N \approx \tilde{\Psi}_N \prod_{1\leq i< j \leq N} f (x_i-x_i)$$
where $f$ encodes the two-body scattering process and $\tilde{\Psi}_N$ is a ground state for~\eqref{eq:eff hamil}. Thus the Dyson lemma allows to extract the short-range correlation structure and we now want to justify that $\tilde{\Psi}_N$ can be approximated by a tensor power $u ^{\otimes N}$, that is, we want to justify the mean-field approximation at the level of the ground state of~\eqref{eq:eff hamil}.

There are two key difficulties left:
\begin{itemize}
\item  The effective Hamiltonian is genuinely many-body. It can be bound\-ed below by a three-body Hamiltonian, but obviously one will ultimately have to show that the three-body contribution can be neglected. 
\item To recover the correct energy in the limit we need to take $R\ll N ^{-1/3}$ in order to be able to neglect the three-body contribution in the effective Hamiltonian. We thus still have to deal with the mean-field approximation in the ``rare but strong collisions'' limit. In other words, even though the effective Hamiltonian is much less singular than the original one, we do not have the freedom to reduce the singularity as much as we would like. 
\end{itemize}

It is in treating these two difficulties that our new method significantly departs from the previous works~\cite{LewNamRou-14c,LieSei-06}. We shall rely on a  strong a priori estimate for ground states of~\eqref{eq:eff hamil}. In Lemma \ref{lem:a-priori-estimate}, we assume \eqref{eq:assumption-A-V} and show that (provided $R\gg N^{-2/3}$, which is sufficient for our purpose) 
\begin{equation} \label{eq:second-moment}
(\wH_N)^2 \ge \frac{1}{3}\Big(\sum_{j=1}^N \wh_j \Big)^2 .
\end{equation}
Note that a bound of this kind is not available for the original $H_N$ due to the singularity of its interaction potential. In particular, \eqref{eq:second-moment} implies that every ground state $\wPsi_N$ of $\wH_N$ satisfies the strong a-priori estimate 
\begin{align} \label{eq:intro-h1h2}
\langle \wPsi_N, \wh_1 \wh_2 \wPsi_N \rangle \le C_{\eps,s}.
\end{align}
This second moment  estimate is the key point in our analysis in the next steps. It is reminiscent of similar estimates used in the literature for the time-dependent problem~\cite{ErdSchYau-07,ErdSchYau-09,ErdYauSch-10,ErdYau-01}.

\medskip
\noindent\textbf{Notation.} We always denote by $C_{\eps}$ (or $C_{\eps,s}$) a (generic) constant independent of $s$, $N$ and $R$ (or independent of $N$ and $R$, respectively).

\medskip

\noindent{\bf Step 3 (Three-body estimate).} Next we have to remove the cut-off 
$$\prod_{k\ne i,j} \theta(x_j-x_k)$$
in $W_N$ to obtain a lower bound in terms of a two-body Hamiltonian. Using the elementary inequality (see \cite[Eq. (22)]{LieSei-06})
$$ \prod_{k:k\ne i,j} \theta_{2R}(x_j-x_k) \ge 1 - \sum_{k: k\ne i,j} (1-\theta_{2R}(x_j-x_k)) $$
we have
\begin{align} \label{eq:WN-UR} 
W_N \ge \sum_{i\ne j}^N U_R(x_i-x_j) - \sum_{k\ne i\ne j \ne k}U_R(x_i-x_j)(1-\theta_{2R}(x_j-x_k))
\end{align}
and we thus have only a three-body term to estimate. Since the summand in this term is zero except when $|x_i-x_j|\le R$ and $|x_j-x_k|\le 4R$, the last sum of \eqref{eq:WN-UR} can be removed if the probability of having three or more particles in a region of diameter $O(R)$ is small enough. This should be the case if $R$ is much smaller than $N^{-1/3}$, the average distance of particles, but it is rather difficult to confirm this intuition rigorously. 

In \cite{LieSei-06}, a three-body estimate was established using a subtle argument based on path integrals (the Trotter product formula). In this paper, we will follow a different, simpler approach. Instead of working directly with a ground state of $H_N$ as in \cite{LieSei-06}, we will consider a  ground state $\wPsi_N$ of the effective Hamiltonian $\wH_N$. Thanks to the second moment  estimate \eqref{eq:second-moment} we can show that (see Lemma \ref{lem:3-body})
\begin{align}\label{eq:intro-3-body} \sum_{k=3}^N \langle \wPsi_N, U_R(x_1-x_2)\theta_{2R}(x_2-x_k)\wPsi_N \rangle \le C_{\eps,s} NR^2
\end{align}
The right side of \eqref{eq:intro-3-body} is small with our choice $N^{-1/2}\gg R$. 

\medskip

\noindent {\bf Step 4 (Mean-field approximation).} With the cut-off in $W_N$ removed, $\wH_N$ turns into the two-body Hamiltonian  
$$
K_N := \sum_{j=1}^N \wh_j + \frac{(1-\eps)^2}{N} \sum_{i\ne j}U_R(x_i-x_j) 
$$
for which we can validate the {\em mean-field approximation}. This is the simplest approximation for Bose gases where one restricts the many-body wave functions to the pure tensor products $u^{\otimes N}$. Since $U_R$ converges to the delta-interaction with mass $\int U_R=4\pi a$, we formally obtain the following approximation for the ground state energy
$$e_{\rm NL}(\eps,s):=\inf_{\|u\|_{L^2}=1} \left( \langle u, \wh u \rangle +  (1-\eps)^2 4\pi a \int |u|^4 \right).
$$ 
In Section \ref{sec:MF} we will show that 
\begin{align} \label{eq:Hartree-cv}
\lim_{N\to \infty} \frac{\inf\sigma(K_N)}{N} = e_{\rm NL}(\eps,s) .
\end{align}

A similar result was proved in \cite{LieSei-06} using a coherent state method, which is a generalization of the c-number substitution in \cite{LieSeiYng-05}. In the present paper, we will provide an alternative proof of \eqref{eq:Hartree-cv} using the quantum de Finetti theorem of St\o rmer \cite{Stormer-69} and Hudson and Moody \cite{HudMoo-75}. We note that this theorem has proved useful also in the derivation of the GP equation in the dynamical case, see \cite{Chen}.
 The following formulation is taken from \cite[Corollary 2.4]{LewNamRou-14} (see~\cite{rougerie-cdf} for a general discussion and more references): 

\begin{theorem}[\textbf{Quantum de Finetti}]\label{thm:DeFinetti}\mbox{}\\
Let $\mathfrak{K}$ be an arbitrary separable Hilbert space and let $\Psi_N\in\bigotimes_{\rm sym}^N \mathfrak{K}$ with $\|\Psi_N\|=1$. Assume that the sequence of one-particle density matrices $\gamma^{(1)}_{\Psi_N}$ converges strongly in trace class when $N\to \infty$. Then, up to a subsequence, there exists a (unique) Borel probability measure $\mu$ on the unit  sphere $S\mathfrak{K}$, invariant under the group action of $S^1$, such that 
\begin{equation}\label{eq:cv-state-deF}
\lim_{N\to \infty} \Tr \left| \gamma^{(k)}_{\Psi_{N}}-\int |u^{\otimes k}\rangle\langle u^{\otimes k}| \, d\mu(u) \right|,\quad \forall k\in \mathbb{N}. 
%\label{eq:melange}
\end{equation}
\end{theorem}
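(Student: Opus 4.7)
My plan is to obtain the measure $\mu$ in three stages: first extract weak-$*$ limits of all reduced density matrices by compactness, then upgrade to trace-norm convergence using the hypothesis on $\gamma^{(1)}_{\Psi_N}$, and finally invoke an infinite quantum de Finetti representation on the limits.

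For the extraction, each $\gamma^{(k)}_{\Psi_N}$ lies on the unit sphere of the trace class $\mathfrak{S}^1(\bigotimes_{\rm sym}^k \mathfrak{K})$, which is the dual of the compact operators. Banach-Alaoglu gives weak-$*$ cluster points for each fixed $k$, and a Cantor diagonal argument yields a single subsequence $(N_\ell)$ along which $\gamma^{(k)}_{\Psi_{N_\ell}} \wto \gamma^{(k)}$ for every $k \ge 1$. The partial-trace compatibility $\Tr_{k+1}\gamma^{(k+1)}_{\Psi_N} = \gamma^{(k)}_{\Psi_N}$ passes to the limit when tested against compact operators, so the $(\gamma^{(k)})_k$ form a consistent symmetric hierarchy. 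For the upgrade, the hypothesis $\gamma^{(1)}_{\Psi_N} \to \gamma^{(1)}$ in $\mathfrak{S}^1$ forces $\Tr\gamma^{(1)} = 1$, which propagates by compatibility to $\Tr\gamma^{(k)} = 1$ for every $k$; for positive trace-class operators, weak-$*$ convergence together with convergence of traces automatically upgrades to trace-norm convergence (a Gr\"umm-type statement), yielding $\gamma^{(k)}_{\Psi_{N_\ell}} \to \gamma^{(k)}$ in $\mathfrak{S}^1$ for every $k$.

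With the $\gamma^{(k)}$ in hand, the third stage applies the Hudson-Moody / St\o{}rmer infinite quantum de Finetti theorem to the consistent symmetric hierarchy: it produces a unique Borel probability measure $\mu$ on the unit sphere $S\mathfrak{K}$, automatically $S^1$-invariant because $|u^{\otimes k}\rangle\langle u^{\otimes k}|$ depends on $u$ only through $|u\rangle\langle u|$, satisfying $\gamma^{(k)} = \int |u^{\otimes k}\rangle\langle u^{\otimes k}|\, d\mu(u)$ for all $k$.

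The main obstacle is precisely this last step when $\mathfrak{K}$ is infinite-dimensional, where the classical Størmer argument is nontrivial. A transparent constructive route is truncation: let $P_M$ be the spectral projection onto the top $M$ eigenvectors of $\gamma^{(1)}$, apply the finite-dimensional quantitative quantum de Finetti bound of Christandl-K\"onig-Mitchison-Renner on $(P_M\mathfrak{K})^{\otimes k}$ to $P_M^{\otimes N}\Psi_N$ (suitably renormalized), producing an approximating measure $\mu_M$ on the sphere of $P_M\mathfrak{K}$, and then send $M \to \infty$. The trace-norm convergence of $\gamma^{(1)}_{\Psi_N}$ established in stage two is exactly what controls the tail $(1-P_M)\gamma^{(1)}_{\Psi_N}$ uniformly in $N$, giving tightness of $\mu_M$ and a weak limit $\mu$ on $S\mathfrak{K}$ with the desired representation.
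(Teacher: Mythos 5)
The paper does not actually prove Theorem~\ref{thm:DeFinetti}: it quotes the statement from \cite[Corollary 2.4]{LewNamRou-14}, which rests on the St\o rmer--Hudson--Moody theorem, so your attempt should be measured against that standard argument. Your overall architecture (weak-$*$ extraction of the whole hierarchy, upgrade to trace-norm convergence using the one-body hypothesis, then the infinite-dimensional de Finetti representation) is indeed the right one and is essentially how the cited result is proved. However, there is a genuine gap at the step where you claim that ``the partial-trace compatibility passes to the limit when tested against compact operators.'' Testing $\Tr_{k+1}\gamma^{(k+1)}_{\Psi_N}=\gamma^{(k)}_{\Psi_N}$ against a compact operator $K$ on the $k$-particle space means controlling $\Tr\big[(K\otimes \1)\,\gamma^{(k+1)}_{\Psi_N}\big]$, and $K\otimes \1$ is \emph{not} compact; weak-$*$ convergence alone only yields $\Tr_{k+1}\gamma^{(k+1)}\le\gamma^{(k)}$, since mass can escape in the traced-out variable (this loss of compatibility is precisely the phenomenon behind the ``weak'' de Finetti theorem). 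Consequently $\Tr\gamma^{(k)}=1$ does not follow ``by compatibility'' as you assert. The fix is to use the hypothesis on $\gamma^{(1)}_{\Psi_N}$ \emph{before}, not after: pick a finite-rank spectral projection $P$ of the limit with $\sup_N \Tr\big[(\1-P)\gamma^{(1)}_{\Psi_N}\big]\le\epsilon$, and use the symmetry bound $\1-P^{\otimes k}\le\sum_{j=1}^{k}(\1-P)_j$ on $\bigotimes_{\rm sym}^k\mathfrak{K}$ to get uniform tightness of $\gamma^{(k)}_{\Psi_N}$ for every $k$. This gives $\Tr\gamma^{(k)}=1$, the Gr\"umm-type upgrade to trace-norm convergence you invoke, and only then does the hierarchy consistency pass to the limit, after which Hudson--Moody applies to produce $\mu$.

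Two smaller points. The representing measure is not ``automatically'' $S^1$-invariant: one obtains an invariant measure by averaging over phases, and uniqueness holds within the class of invariant measures (otherwise it is only unique modulo the phase action). In your truncation sketch, renormalizing $P_M^{\otimes N}\Psi_N$ is dangerous because its norm can tend to zero as $N\to\infty$ at fixed $M$; the quantitative route in the literature (used elsewhere in this paper via \cite[Lemma 3.4]{LewNamRou-14c}) goes through a Fock-space localization of the state rather than a naive projection of the wave function, and the tail control again comes from exactly the one-body tightness described above.
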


This theorem validates the mean-field approximation for a large class of trapped Bose gases, in particular (see \cite{LewNamRou-14} and references therein) when the strength of the interaction is proportional to the inverse of the particle number, case $\beta = 0$  in~\eqref{eq:w delta}. However, when the interaction becomes stronger, the mean-field approximation is harder to justify. The convergence \eqref{eq:Hartree-cv} with $R\gg N^{-2/15}$ was proved in \cite{LewNamRou-14c} by using a quantitative version of Theorem \ref{thm:DeFinetti} valid for finite dimensional spaces~\cite{ChrKonMitRen-07,Chiribella-11,LewNamRou-14b}. However, this range of $R$ is too small for our purpose because we are forced to choose $R\ll N^{-1/2}$ in the previous steps. 

In this paper, thanks to the strong a-priori estimate \eqref{eq:intro-h1h2} we are able to prove \eqref{eq:Hartree-cv} for the larger range $R \gg N^{-2/3}$. As in~\cite{LewNamRou-14c,LieSei-06} we localize the problem onto energy levels of the one-body Hamiltonian $\tilde{h}$ lying below a chosen cut-off $\Lambda$. At fixed $\Lambda$, it turns out that the projected Hamiltonian is bounded proportionally to $N$. We are thus in a usual mean-field scaling if we are allowed to take $N\to \infty$ first, and then $\Lambda \to \infty$ later. Taking limits in this order  demands a very strong control on the localization error made by projecting the Hamiltonian, however. This control is provided again by the moment estimate~\eqref{eq:intro-h1h2}.  

\medskip

Combining the arguments in steps 1-4, we can pass to the limit $N\to \infty$, then $s\to \infty$ and $\eps\to 0$ to obtain the energy  convergence \eqref{eq:cv-energy} under the extra condition \eqref{eq:assumption-A-V}. In Section \ref{sec:GP-energy} we remove this technical assumption using a concavity argument from \cite{LieSei-06} and a binding inequality which goes back to an idea in \cite{Lieb-84}.
\medskip
\text{}\\
{\bf Step 5 (Convergence of ground states).} In Section \ref{sec:GS} we prove the convergence of (approximate) ground states using the convergence of the ground state energy of a perturbed Hamiltonian and the Hellmann-Feynman principle. A similar approach was used in \cite{LieSei-06} to prove the convergence of the 1-particle density matrix. However, the quantum de Finetti theorem helps us to avoid the complicated convex analysis in \cite{LieSei-06}, simplifying the proof significantly and giving access to higher order density matrices. 

\section{Second moment estimate} \label{sec:2m}

In this section we consider the effective Hamiltonian obtained after applying the generalized Dyson lemma to the original problem, namely
$$
\wH_N = \sum_{j=1}^N \wh_j + \frac{(1-\eps)^2}{N} W_N\,,
$$
where $\wh$ and $W_N$ are defined in (\ref{eq:htilde}) and (\ref{eq:WN}), respectively.
We will work under the extra assumption \eqref{eq:assumption-A-V}. Since $A\in L_{\rm loc}^3(\R^3,\R^3)$ and $V$ grows faster than $|A|^2$ at infinity, for every $\eps>0$ we have 
$$(V/2 - 2\eps^{-1}|A|^2)_-\in L^{3/2}(\R^3)$$
and hence 
$$(\eps/4) p^2 +V/2 - 2\eps^{-1}|A|^2 \ge -C_\eps.$$
In combination with the Cauchy-Schwarz inequality, we get
\begin{align*}
h- (1-\eps)p^2\theta_s(p) &\geq \frac{\eps}{2} p ^2 - 2\eps^{-1} |A|^2 + V \geq  \frac{\eps}{4}p^2+ \frac{V}{2}-C_{\eps}. 
\end{align*}
Therefore, $\inf\sigma(h)-1\ge \kappa_{\eps,s} \ge -C_{\eps}$ and
\begin{align} \label{eq:wh-lower}
\wh \geq C_{\eps}\left( -\Delta+V+1 \right).
\end{align}

The key estimate in this section is the following

\begin{lemma}[\textbf{Second moment estimate}]\label{lem:a-priori-estimate}\mbox{}\\
Assume that \eqref{eq:assumption-A-V} holds. For every $1>\eps>0$ and $s>0$, if  
$$R=R(N) \gg N^{-2/3}$$
when $N\to \infty$, then for $N$ large enough we have the operator bound
\begin{equation} \label{eq:h1h2}
(\wH_N)^2 \ge \frac{1}{3} \Big(\sum_{j=1}^N \wh_j \Big)^2.
\end{equation} 
\end{lemma}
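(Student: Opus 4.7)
Write $\wH_N = T + V$ with
\[
T := \sum_{j=1}^N \wh_j\,,\qquad V := \tfrac{(1-\eps)^2}{N}\,W_N.
\]
Both are non-negative, and in fact $T\geq N\cdot 1\geq 1$ since $\wh_j\geq 1$ by the choice of $\kappa_{\eps,s}$. Expanding $\wH_N^2 = T^2 + \{T,V\} + V^2$ and using $V^2\geq 0$, the claim reduces to the operator inequality
\[
\{T,V\} \,\geq\, -\tfrac{2}{3}T^2.
\]

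I would handle the anticommutator via the identity
\[
\{T,V\} \,=\, 2\,V^{1/2}\,T\,V^{1/2} \,+\, \bigl[[T,V^{1/2}],V^{1/2}\bigr],
\]
valid because $V^{1/2}$ is a multiplication operator. The first term is non-negative (as $T\geq 0$), so the task is to lower-bound the double commutator. A direct computation using the magnetic Schr\"odinger structure of each $\wh_j$ shows that the $(p_j+A_j)^2$-part contributes $-2|(\nabla_j+iA_j)V^{1/2}|^2$, while the pseudo-differential correction coming from $-(1-\eps)p_j^2\theta_s(p_j)$ is a bounded, lower-order remainder controlled by standard symbolic calculus. One is thereby reduced to proving the operator bound
\[
\sum_{j=1}^N \bigl|(\nabla_j+iA_j)V^{1/2}\bigr|^2 \,\leq\, \tfrac{1}{3}\,T^2,
\]
which is where the scaling hypothesis $R\gg N^{-2/3}$ enters: the nearest-neighbor cut-off $\prod_{k\neq i,j}\theta_{2R}(x_j-x_k)$ ensures that $V^{1/2}$ is pointwise supported on configurations where two particles lie within distance $R$, and combined with the $1/N$ prefactor this yields $V\sim R^{-3}/N$ and $|\nabla V^{1/2}|^2\sim R^{-5}/N$ on that support. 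Matching these multiplication operators against two-body kinetic operators $\wh_i\wh_j$ via Sobolev--Hardy inequalities on the colliding pair of variables, and exploiting that
\[
T^2 = \sum_{j}\wh_j^2 + \sum_{i\neq j}\wh_i\wh_j \geq c\sum_{i\neq j}\wh_i\wh_j
\]
carries enough two-body kinetic energy, closes the bound.

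\textbf{Main obstacle.} One cannot obtain $\wH_N^2\geq c\,T^2$ by abstract means from $\wH_N\geq T\geq 1$ alone: elementary $2\times 2$ examples show that $A\geq B\geq 0$ does \emph{not} imply $A^2\geq c\,B^2$ for any universal $c>0$. The proof must therefore genuinely exploit the structure of $V$, namely the $1/N$ prefactor together with the near-neighbor cut-off that makes $V$ a ``sparse'' two-body operator whose pointwise gradients can be matched against two-body kinetic moments of $T^2$. The hypothesis $R\gg N^{-2/3}$ is precisely what is needed for this matching to close, via the heuristic that the ``collision density'' $NR^3$ of typical configurations is then small enough to absorb all error terms.
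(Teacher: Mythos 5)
There is a genuine gap, and it is quantitative rather than cosmetic: your final matching step does not close in the range of $R$ that the lemma claims (and needs). Your IMS identity $\{T,V\}=2V^{1/2}TV^{1/2}+[[T,V^{1/2}],V^{1/2}]$ is correct, and your size estimate $|\nabla_j V^{1/2}|^2\lesssim R^{-5}/N$ on the collision set is essentially right (note, incidentally, that for a real multiplication operator the magnetic field drops out of the double commutator, so the error is $-2|\nabla_j V^{1/2}|^2$, not $-2|(\nabla_j+iA_j)V^{1/2}|^2$; also $V^{1/2}$ is only Lipschitz, so "standard symbolic calculus" for the $p^2\theta_s(p)$ piece is not innocent). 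But the localization error you produce is a bare multiplication operator of height $\sim R^{-5}/N$ and range $R$ around each colliding pair, and the matching you propose -- against $\sum_{i\neq j}\wh_i\wh_j$ via Sobolev/$L^1$-type bounds as in \eqref{eq:W2} -- costs $\|g\|_{L^1}\sim (NR^2)^{-1}$ per pair. That closes only if $R\gtrsim N^{-1/2}$, which is strictly stronger than the stated hypothesis $R\gg N^{-2/3}$ and, worse, is incompatible with the window $N^{-2/3}\ll R\ll N^{-1/2}$ in which the lemma is actually applied later (the three-body estimate and Proposition \ref{pro:MF} require $R\ll N^{-1/2}$). Your heuristic that the ``collision density $NR^3$'' controls the errors points to the condition $R\ll N^{-1/3}$, an upper bound on $R$; it does not produce the threshold $N^{-2/3}$, and operator inequalities must hold for all states, not only typical configurations, so this heuristic cannot substitute for the missing estimate. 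Trying other pairings (against $N\sum_j\wh_j$ via $L^{3/2}$, or against $\wh_j^2$ via $L^1$ in the $x_j$-variable) gives thresholds $N^{-1/3}$, $N^{-2/5}$ or worse, so the gap is not fixable by a different Sobolev exponent alone.

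The paper reaches $N^{-2/3}$ by a different mechanism that avoids ever bounding a bare localization potential by kinetic energy. For each $\ell$ it splits $W_N=W_a+W_b$, where $W_a$ collects the terms with $\ell\in\{i,j\}$; the nearest-neighbor cutoff makes $W_a\le C R^{-3}$ in \emph{sup norm}, and an asymmetric Cauchy--Schwarz gives $\pm(\wh_\ell W_a+W_a\wh_\ell)\le N^{-1}\wh_\ell W_a\wh_\ell+NW_a\le \frac{C}{NR^3}\wh_\ell^2+NW_a$. The sandwiched form $\wh_\ell W_a\wh_\ell$ is what lets the sup bound convert directly into $\wh_\ell^2$, with coefficient $\frac{C}{N^2R^3}$ after the overall $1/N$ -- this is exactly where $R\gg N^{-2/3}$ comes from -- while the large leftover $NW_a$ (summing to $2NW_N$) is absorbed into the retained $W_N^2/N^2$ term by completing the square, at cost $N^2\le(\sum_j\wh_j)^2$ since $\wh\ge 1$. (So discarding $V^2$ at the outset, as you do, also forfeits a term the paper needs.) The remaining part $W_b$ is split into a piece $W_c$ commuting with $\wh_\ell$ and a remainder $W_d$ controlled by the two bounds $W_d\le CR^{-3}$ and $W_d\le C_\eps R^2\wh_\ell W_c$, yielding an error $\frac{C_\eps}{R}\wh_\ell^2$. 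If you want to salvage an IMS-type route you would have to keep the error in a form sandwiched between two factors of $\wh_\ell$ rather than as a multiplication operator; as written, your plan proves the lemma only for $R\gtrsim N^{-1/2}$ and therefore does not establish the stated result.
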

We will show in Subsection~\ref{sec:3-body} that a convenient lower bound to Dyson's potential $W_N$ in terms of truly two-body operators follows from Lemma~\ref{lem:a-priori-estimate}.

Before proving Lemma \ref{lem:a-priori-estimate} in Subsection~\ref{sec:proof a priori}, we first collect some useful inequalities on a generic translation-invariant interaction operator $W(x-y)$ that will be used throughout the paper. 

\subsection{Operator inequalities for interaction potentials}\label{sec: op ineq}

We state several useful inequalities in the following lemma. In fact~\eqref{eq:W1} is well-known and~\eqref{eq:W2} with $\delta=0$ was proved earlier in \cite[Lemma 5.3]{ErdYau-01}. In the sequel we will crucially rely on the improvement to $\delta >0$, and on~\eqref{eq:W3} which seem to be new.

\begin{lemma}[\textbf{Inequalities for a repulsive interaction potential}]\label{lem:W}\mbox{}\\
For every $0\le W\in L^1\cap L^2(\R^3)$, the multiplication operator $W(x-y)$ on $L^2((\R^3)^2)$ satisfies
\begin{equation}  
0\le W(x-y)\le C \|W\|_{L^{3/2}(\R^3)} (-\Delta_x) \label{eq:W1} 
\end{equation}
and, for any $0 \leq \delta < 1/4$
\begin{equation}
0\le W(x-y)\le C_\delta \|W\|_{L^1(\R^3)} (1-\Delta_x)^{1-\delta}  (1-\Delta_y)^{1-\delta}. \label{eq:W2}  
\end{equation}
Moreover, for all $1>\eps>0$, $s>0$, $A \in L^3_{\rm loc}(\R^3,\R^3)$ and $0\le V\in L^1_{\rm loc}(\R^3)$, 
\begin{multline} 
\wh_x W(x-y) +W(x-y)\wh_x \\
\ge - C\left( \| W \|_{L^2} + (1+s^{2}) \|W\|_{L^{3/2}} \right) (1-\Delta_x) (1-\Delta_y).\label{eq:W3}
\end{multline}
%Here $C$ and $C_\delta$ are finite constants independent of $W$.
\end{lemma}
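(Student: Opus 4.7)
The plan is to prove the three inequalities in order; \eqref{eq:W1} is classical, \eqref{eq:W2} sharpens a known Erd\H{o}s--Yau bound (the case $\delta=0$), and \eqref{eq:W3} is the main new estimate. For \eqref{eq:W1}, I would fix $y$ and combine H\"older with the Sobolev embedding $H^1(\R^3)\hookrightarrow L^6(\R^3)$: $\int W(x-y)|\psi(x,y)|^2\,dx \le \|W\|_{L^{3/2}}\|\psi(\cdot,y)\|_{L^6}^2 \le C\|W\|_{L^{3/2}}\|\nabla_x\psi(\cdot,y)\|_{L^2}^2$, then integrate in $y$.

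For \eqref{eq:W2}, I would use the unitary change of variables $u=x-y$, $v=y$, so that $\tilde\psi(u,v):=\psi(u+v,v)$ satisfies $\int W(x-y)|\psi|^2\,dx\,dy = \int W(u)\,\|\tilde\psi(u,\cdot)\|_{L^2_v}^2\,du \le \|W\|_{L^1}\,\|\tilde\psi\|_{L^\infty_u L^2_v}^2$. In the dual variables $(\xi,\eta)$, the symbols of $(1-\Delta_x)$ and $(1-\Delta_y)$ on $\tilde\psi$-space are $1+|\xi|^2$ and $1+|\eta-\xi|^2$ respectively. Plancherel in $v$, then a Cauchy--Schwarz on the $\xi$-integral representation of $\tilde\psi(\cdot,\eta)$ against the weight $(1+|\xi|^2)^{1-\delta}(1+|\eta-\xi|^2)^{1-\delta}$, reduces the statement to the uniform bound
\[
\sup_\eta\,\int_{\R^3}\frac{d\xi}{(1+|\xi|^2)^{1-\delta}(1+|\eta-\xi|^2)^{1-\delta}} < \infty.
\]
Since this integral is the autoconvolution of $f(\xi):=(1+|\xi|^2)^{-(1-\delta)}$, Young's inequality gives the supremum as $\|f\|_{L^2}^2$, finite precisely when $2(1-\delta)>3/2$, i.e.\ $\delta<1/4$. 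This is where the restriction enters.

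For the main new bound \eqref{eq:W3}, I would expand $\wh = (-i\nabla+A)^2 + V - (1-\eps)p^2\theta_s(p) - \kappa_{\eps,s}$ and analyze each contribution to $\{\wh, W\} = \wh W + W\wh$ as a quadratic form on smooth $\psi$. The non-negative multiplications $V\ge 0$ and $|A|^2\ge 0$ yield $\{V, W\}, \{|A|^2, W\} \ge 0$, which drop from the lower bound; the constant contributes $-2\kappa_{\eps,s}W$, absorbed via \eqref{eq:W1} since $|\kappa_{\eps,s}| \le C(1+s^2)$ (the spectrum of $h-(1-\eps)p^2\theta_s(p)$ may dip as low as $\sim-s^2$ on the window $|p|\sim s$), yielding the $(1+s^2)\|W\|_{L^{3/2}}(1-\Delta_x)(1-\Delta_y)$ contribution. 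For the kinetic--magnetic piece, the identity
\[
\{(-i\nabla+A)^2, W\} = 2(-i\nabla+A)\cdot W\cdot(-i\nabla+A) - \Delta_x W
\]
isolates a manifestly non-negative quadratic form plus a distributional Laplacian $-\Delta_x W$ that, via integration by parts, equals $2\,\mathrm{Im}\,\langle\psi,(\nabla_x W)\cdot p\,\psi\rangle$ (no regularity of $W$ is needed). A Cauchy--Schwarz on this bilinear form, combined with \eqref{eq:W2} applied to $W^2$---whose $L^1$-norm is $\|W\|_{L^2}^2$---produces the $\|W\|_{L^2}(1-\Delta_x)(1-\Delta_y)$ contribution after exchanging derivatives of $W$ for derivatives of $\psi$. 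The cutoff piece $-(1-\eps)\{p^2\theta_s(p),W\}$ is handled similarly in Fourier, giving combinations of $\|W\|_{L^2}$ and $(1+s^2)\|W\|_{L^{3/2}}$ terms.

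The main obstacle in \eqref{eq:W3} is the delicate bookkeeping: although $\wh_x$ acts only on $x$, the target bound involves both $(1-\Delta_x)$ and $(1-\Delta_y)$. The key point is that $W(x-y)$ intrinsically carries ``one derivative each in $x$ and $y$''---quantified precisely by \eqref{eq:W1} in the $L^{3/2}$-norm of $W$ and by \eqref{eq:W2} applied to $W^2$ in the $L^2$-norm---so every Cauchy--Schwarz must be arranged so that a factor of $W$ or $W^2$ absorbs the superfluous $(1-\Delta_x)$ into its $L^p$-norm, leaving exactly the product $(1-\Delta_x)(1-\Delta_y)$ on the right.
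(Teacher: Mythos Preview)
Your proofs of \eqref{eq:W1} and \eqref{eq:W2} are correct. For \eqref{eq:W2} the paper takes a different but equivalent route: it passes to the dual estimate $\sqrt{W}\,(1-\Delta_x)^{\delta-1}(1-\Delta_y)^{\delta-1}\sqrt{W}\le C_\delta\|W\|_{L^1}$ via the $K^*K\le 1\Leftrightarrow KK^*\le 1$ trick and then bounds the kernel using the Green function of $(1-\Delta)^{\delta-1}$. The finiteness condition that emerges, $\int(1+|k|^2)^{-2(1-\delta)}\,dk<\infty$, is exactly your $\|f\|_{L^2}^2$, so the two arguments are just Fourier duals of each other.

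For \eqref{eq:W3} your packaging of the kinetic--magnetic part through the identity $\{(-i\nabla+A)^2,W\}=2(-i\nabla+A)\cdot W\cdot(-i\nabla+A)+(-\Delta W)(x-y)$ is a slightly cleaner version of what the paper does (it treats $-\Delta$ and the magnetic terms separately but arrives at the same positive term and the same commutator remainder). The step you describe as ``exchanging derivatives of $W$ for derivatives of $\psi$'' must, however, be made precise: the key is $\nabla_x W(x-y)=-\nabla_y W(x-y)$ followed by an integration by parts in $y$, which converts the remainder into a sum of $\int W\,\nabla_x\bar f\cdot\nabla_y f$ and $\int W\,(\nabla_y\nabla_x\bar f)\,f$; only then do \eqref{eq:W1} and \eqref{eq:W2} (applied to $W^2$) close the estimate. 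Without this swap one is left with a bare $\nabla W$ and no way to invoke \eqref{eq:W2}.

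Two points need correction. First, the claim $|\kappa_{\eps,s}|\le C(1+s^2)$ with universal $C$ is wrong: $\kappa_{\eps,s}$ depends on $A,V$, is finite only under \eqref{eq:assumption-A-V}, and then satisfies $|\kappa_{\eps,s}|\le C_\eps$ \emph{independently of $s$} (note $h-(1-\eps)p^2\theta_s\ge h-(1-\eps)p^2$, so there is no ``dip to $-s^2$''). Second, your handling of the cutoff piece is too vague. The paper does not argue in Fourier here: it writes $p^2-(1-\eps)p^2\theta_s=\eps p^2+(1-\eps)p^2(1-\theta_s)$, reuses the already proven bound for $\eps p^2$, and for the bounded-symbol part applies the operator Cauchy--Schwarz $\pm(XY+Y^*X^*)\le\delta XX^*+\delta^{-1}Y^*Y$ with $X=p_x^2(1-\theta_s(p_x))$ and $Y=W$, then \eqref{eq:W1} together with $p^4(1-\theta_s(p))^2\le 16s^4$; optimizing $\delta\sim s^{-2}$ yields the $s^2\|W\|_{L^{3/2}}$ contribution.
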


\begin{proof}[Proof of Lemma \ref{lem:W}] We start with the \ %\mbox{}\\

\medskip
\noindent\textbf{Proof of~\eqref{eq:W1}.} From H\"older's and Sobolev's inequalities, we have 
\begin{align*}
 \langle f, W(x-y)f \rangle & =\iint W(x-y)  |f(x,y)|^2 dxdy \\
&\le \int \left( \int W(x-y)^{3/2}dx\right)^{2/3} \left( \int |f(x,y)|^{6}dx \right)^{1/3} dy \\
& \le C  \|W\|_{L^{3/2}(\R^3)} \int \left( \int |\nabla_x f(x,y)|^2 dx \right) dy 
\end{align*}
for every function $f\in H^{1}((\R^3)^2)$. Therefore, \eqref{eq:W1} follows immediately.

\medskip

\noindent\textbf{Proof of~\eqref{eq:W2}.} The estimate \eqref{eq:W2} with $\delta=0$ was first derived in \cite{ErdYau-01}. The following proof is adapted from the proof (again for $\delta=0$) in \cite{LieSei-06}. Note that for every operator $K$ we have $K^*K \le 1$ if and only if $KK^*\le 1$. Therefore, \eqref{eq:W2} is equivalent to 
\begin{equation} \label{eq:W2-equiv}
\sqrt{W(x-y)} (1-\Delta_x)^{\delta-1}(1-\Delta_y)^{\delta-1} \sqrt{W(x-y)}\le C_\delta \|W\|_{L^1}.
\end{equation}
Let $G$ be the Green function of $(1-\Delta)^{\delta-1}$, whose Fourier transform is given by 
$$ \widehat G (k) := \int_{\R^3} e^{-2\pi i x.k} G(x)dx =  \frac{1}{(1+4\pi^2|k|^2)^{1-\delta}}. $$
For every function $f\in L^2((\R^3)^2)$ one has
\begin{align*}
&\langle f, \sqrt{W(x-y)} (1-\Delta_x)^{\delta-1}  (1-\Delta_y)^{\delta-1} \sqrt{W(x-y)} f\rangle \\
 =& \int   \overline{f(x,y)}  \sqrt{W(x-y)} G(x-x') G(y-y') \sqrt{W(x'-y')} f(x',y') dxdydx'dy' \\
 &\le  \int  \frac{W(x-y) |G(x-x')|^2 |f(x',y')|^2 + W(x'-y') G(y-y')|^2 |f(x,y)|^2  }{2}  \\
& \quad =  C_\delta \|W\|_{L^1} \langle f, f \rangle
\end{align*}
where
$$
C_\delta := \int |G|^2 = \int |\hat G|^2 =  \int_{\R^3} \frac{dk}{(1+4\pi^2|k|^2)^{2(1-\delta)}} 
$$
which is finite for all $0 \leq \delta < 1/4$. Thus \eqref{eq:W2-equiv}, and hence \eqref{eq:W2}, holds true.

\medskip

\noindent\textbf{Simpler version of~\eqref{eq:W3}.} We are going to deduce~\eqref{eq:W3} from the inequality
\begin{multline} \label{eq:W3-a} 
 (-\Delta_x) W(x-y) + W(x-y) (-\Delta_{x}) \\
\ge -C \Big( \|W\|_{L^{3/2}} + \|W\|_{L^{2}} \Big) (1-\Delta_x) (1-\Delta_y) .
\end{multline} 
By an approximation argument, one can assume that $W$ is smooth. For every $f\in H^2(\R^3\times \R^3)$, a straightforward calculation using integration by part, and the identity $\nabla_x (W(x-y))=-\nabla_y (W(x-y))$ gives us
\begin{multline*}
 \left\langle f, \Big((-\Delta_x) W(x-y) + {W(x-y)} (-\Delta_x) \Big) f \right\rangle \\
= 2\Re \iint \nabla_x \overline{f(x,y)} \nabla_x (W(x-y)f(x,y)) dxdy \\
 =2\iint |\nabla_x {f(x,y)}|^2 W(x-y) + 2\Re \iint \nabla_x \overline{f(x,y)} \nabla_x (W(x-y)) f(x,y) dxdy \\
\ge   - 2\Re \iint \nabla_x \overline{f(x,y)} \nabla_y (W(x-y)) f(x,y) dxdy \\
= 2\Re \iint \nabla_y \Big( \big( \nabla_x \overline{f(x,y)}\big) f(x,y)\Big)  W(x-y)  dxdy \\
 = 2\Re \iint \left[ \nabla_x \overline{f(x,y)} \nabla_y f(x,y) + \nabla_y \nabla_x \overline{f(x,y)}\big) f(x,y) \right]W(x-y) dxdy.
\end{multline*}
Using Cauchy-Schwarz and Sobolev's inequality~\eqref{eq:W1}, we get
\begin{align*}
&\left| \iint \nabla_x \overline{f(x,y)} \nabla_y f(x,y)W(x-y) dxdy \right| \\
\le &  \iint \frac{|\nabla_x {f(x,y)}|^2+ |\nabla_y {f(x,y)}|^2}{2} |W(x-y)| dxdy \\
\le &  C\|W\|_{L^{3/2}}\langle f, (-\Delta_x)(-\Delta_y) f \rangle.  
\end{align*}
Moreover, by the Cauchy-Schwarz inequality again and \eqref{eq:W2} (with $\delta=0$ and $W$ replaced by $W^2$),
\begin{align*}
&\Big|  \iint \big( \nabla_y \nabla_x \overline{f(x,y)}\big) f(x,y)W(x-y)  dxdy \Big|\\
&\le \left(\iint |\nabla_y \nabla_x f(x,y)|^2 dxdy \right)^{1/2} \left(\iint |f(x,y)|^2 |W(x-y)|^2 dxdy \right)^{1/2} \\
& \le C\|W\|_{L^2} \langle f, (1-\Delta_x)(1-\Delta_y) f \rangle.  
\end{align*}
Thus we obtain 
\begin{align*}
& \left\langle f, \Big((-\Delta_x) W(x-y) + W(x-y) (-\Delta_x) \Big) f \right\rangle \nn\\
\ge & -C \Big( \|W\|_{L^{3/2}} + \|W\|_{L^{2}} \Big) \langle f, (1-\Delta_x) (1-\Delta_y) f \rangle
\end{align*}
for all $f\in H^2(\R^3\times \R^3)$. This proves~\eqref{eq:W3-a}. 

\medskip

\noindent\textbf{Proof of~\eqref{eq:W3}.} From the commutator relation
$$ p_x W(x-y)= W(x-y)p_x + (-i\nabla_x W)(x-y)$$
we find that
\begin{align*}
&(p_x A(x)+A(x)p_x + |A(x)|^2 )W(x-y)\nn\\
&\quad +W(x-y)(p_x A(x)+A(x)p_x+|A(x)|^2)\nn\\
&= 2\Big( p_x  W(x-y) A(x) + A(x) W(x-y)p_x) + |A(x)|^2 W(x-y) \Big) \nn \\
&= 2(p_x + A(x))W(x-y)(p_x+A(x)) - 2 p_x W(x-y)p_x.
\end{align*}
Using 
$$ (p_x + A(x))W(x-y)(p_x+A(x)) \ge 0$$
and estimating $p_x W(x-y)p_x$ by Sobolev's inequality~\eqref{eq:W1}, we get
\begin{align} \label{eq:W3-b}
&(p_x A(x)+A(x)p_x + |A(x)|^2 )W(x-y)\nn\\
&\quad +W(x-y)(p_x A(x)+A(x)p_x+|A(x)|^2)\nn\\
&\ge - C\|W\|_{L^{3/2}} (-\Delta_x)(-\Delta_y) .
\end{align}
Finally, by \eqref{eq:W1} again and the Cauchy-Schwarz inequality for operators
\begin{align} \label{eq:Cauchy-Schwarz} \pm ( XY+ Y^* X^* ) \le \delta XX^* + \delta^{-1} Y^*Y, \quad \forall \delta>0,
\end{align}
we obtain 
\begin{align*}
&  p_x^2(1-\theta_s( p_x)) W(x-y)+ W(x-y)p_x^2(1-\theta_s( p_x))  \\
&\ge - \delta p_x^2 (1-\theta_s( p_x)) W(x-y) p_x^2 (1-\theta_s( p_x)) + \delta^{-1} W(x-y) \\
&\ge - C \|W\|_{L^{3/2}} \Big( \delta p_x^4 (1-\theta_s( p_x))^2 + \delta^{-1}\Big)(-\Delta_x)
\end{align*}
for all $\delta>0$. Using $1-\theta_s( p) \le \1( |p|\le 2s)$ and choosing $\delta\sim s ^{-2}$ gives 
\begin{align}\label{eq:W3-c}
& p_x^2(1-\theta_s( p_x)) W(x-y)+ W(x-y)p_x^2(1-\theta_s( p_x)) \nn\\
&\ge  -Cs^{2}\|W\|_{L^{3/2}} (-\Delta_x).
\end{align}
From \eqref{eq:W3-a}, \eqref{eq:W3-b} and \eqref{eq:W3-c}, the bound \eqref{eq:W3} follows.
\end{proof}

\subsection{Proof of Lemma~\ref{lem:a-priori-estimate}}\label{sec:proof a priori}

Before completing the proof of Lemma~\ref{lem:a-priori-estimate} we make a remark on the simpler case with the Dyson potential $W_N$ replaced by a truly two-body interaction.

\begin{remark}[Second moment estimate with two-body interactions]\label{rem:simpler mom bound}\mbox{}\\ 
Consider the model case 
$$
K_N := \sum_{j=1}^N \wh_j + \frac{(1-\eps)^2}{N} \sum_{i\ne j}U_R(x_i-x_j) .
$$
By expanding $K_N^2$ and using the fact that $\wh_i\geq 0$ commutes with $U_R (x_j-x_k)\geq 0$ when $i\neq j$ and $i\neq k$, and then using \eqref{eq:W3} to estimate terms of the form 
$$\wh_i U_R(x_i-x_j)+U(x_i-x_j)\wh_i,$$
we obtain 
% Then clearly 
% \begin{align}\label{eq:simpler mom bound}
%  H_N ^2 &\geq \sum_{1 \leq i \neq j \leq N} h_i h_j + \frac{1}{N-1}\sum_{i=1} ^N \sum_{1 \leq j < k \leq N} h_i w_R (x_j-x_k) + w_R (x_j-x_k) h_i  \nonumber \\
%  &\geq \sum_{1 \leq i \neq j \leq N} h_i h_j + \frac{1}{N-1}\sum_{1\leq j \neq k \leq N} h_j w_R (x_j-x_k) + w_R (x_j-x_k) h_j \nonumber \\
%  &\geq \sum_{1 \leq i \neq j \leq N} h_i h_j + \frac{1}{N-1}\sum_{1\leq j \neq k \leq N} (-\Delta_j) w_R (x_j-x_k) + w_R (x_j-x_k) (-\Delta_j) .
% \end{align}
% In the first inequality we use the fact that $h_i\geq 0$ commutes with $w_R (x_j-x_k)\geq 0$ when $i\neq j$ and $i\neq k$ so that in this case $h_i w_R (x_j-x_k) \geq 0$, and then the fact that $V(x_j)\geq 0$ commutes with $w_R(x_j-x_k)$ in the second inequality. Thus, to obtain a bound of the form ... we can simply insert~\eqref{eq:W3-a} and one may check that, provided $R\gg N ^{-2/3}$, this allows to control the last term in~\eqref{eq:simpler mom bound} using the second to last. 
\begin{equation}\label{eq:simpler}
K_N ^2 \geq \frac{1}{3} \sum_{1 \leq i \neq j \leq N} \wh_i \wh_j 
\end{equation}
provided that $R=R(N)\gg N ^{-2/3}$. A similar estimate also holds when $\wh$ is replaced by the original kinetic operator $h$.  
\end{remark}

We stress once again that we do \emph{not} expect~\eqref{eq:simpler} to hold for our original Hamiltonian $H_N$, which is in the more singular regime $R\sim N ^{-1}$.  We thus need to work with the Dyson Hamiltonian, and its rather intricate nature makes the actual proof of Lemma~\ref{lem:a-priori-estimate} more difficult than the one we have sketched for \eqref{eq:simpler}. We now proceed with this proof.

\begin{proof}[Proof of Lemma \ref{lem:a-priori-estimate}] 
We have
\begin{align} \label{eq:h1h2-0}
&(\wH_N)^2  -  \Big(\sum_{j=1}^N \wh_j\Big)^2 = \frac{(1-\eps)^2}{N} \sum_{\ell=1}^N (\wh_\ell W_N + W_N \wh_\ell)  +\frac{(1-\eps)^4}{N^2} W_N^2.
\end{align}
Similarly as in Remark~\ref{rem:simpler mom bound}, the goal is to bound $\wh_1 W_N + W_N \wh_1$ from below. We first decompose the interaction operator as 
$$
W_N = W_a + W_b
$$
where
\begin{align*} 
W_a &= \sum_{1\in \{i,j\}} U_R(x_i-x_j) \prod_{k\ne i,j} \theta_{2R}(x_j-x_k),\\
W_b &=  \sum_{i,j \ge 2} U_R(x_i-x_j) \prod_{k\ne i,j} \theta_{2R} (x_j-x_k).
\end{align*}
{\bf Estimate of $W_a$.} By the Cauchy-Schwarz inequality \eqref{eq:Cauchy-Schwarz} we get
\begin{align} \label{eq:T1-1}
\pm(\wh_1 W_a+ W_a \wh_1) \le N^{-1} \wh_1 W_a \wh_1 + N W_a. 
\end{align}
Let us show that
\begin{align} \label{eq:T1-3} W_a \le \frac{C}{R^3}.
\end{align} 
Indeed, for every given $(x_1,x_2,...,x_N)\in (\R^3)^N$, the product 
$$ U_R(x_1-x_j) \prod_{k\ne 1,j} \theta_{2R}(x_j-x_k)$$
is bounded by $\|U_R\|_{L^\infty}\le CR^{-3}$ and it is zero except in the case 
$$ |x_1-x_j|< R < 2R < \min_{k\ne 1,j}|x_j-x_k|.$$
By the triangle inequality, the latter condition implies that 
\begin{align*}
|x_1-x_j|< R < \min_{k\ne 1,j}|x_1-x_k|
\end{align*}
and it is satisfied by at most one index $j \ne 1$. Therefore,
$$ \sum_{j\ge 2} U_R(x_1-x_j) \prod_{k\ne 1,j} \theta_{2R}(x_j-x_k) \le \frac{C}{R^3}.$$
Similarly, we have
$$ \sum_{i\ge 2} U_R(x_i-x_1) \prod_{k\ne 1,i} \theta_{2R}(x_1-x_k) \le \frac{C}{R^3}$$
and hence \eqref{eq:T1-3} holds true. From \eqref{eq:T1-1} and \eqref{eq:T1-3} we obtain
\begin{align} \label{eq:T1} \pm  \Big(\wh_1 W_a + W_a \wh_1\Big) \le \frac{C}{NR^3} (\wh_1)^2 + 2N \sum_{1\in \{i,j\}} U_R(x_i-x_j) \prod_{k\ne i,j} \theta_{2R}(x_j-x_k).
\end{align}
Here we do not need to estimate the second term on the right side of \eqref{eq:T1} because this term is part of $W_N$ which will be controlled by $W_N^2$ in $\wH_N^2$. 
\medskip
\text{}\\
{\bf Estimate of $W_b$.} We need a further decomposition
$$
W_b  = \sum_{i,j \ge 2} U_R(x_i-x_j) \prod_{k\ne i,j} \theta_{2R} (x_j-x_k) = W_c - W_d
$$
where
\begin{align*}
W_c &:=  \sum_{i,j \ge 2} U_R(x_i-x_j) \prod_{k\ne 1,i,j} \theta_{2R} (x_j-x_k) \\
W_d &:=  \sum_{i,j \ge 2} U_R(x_i-x_j) \Big(1-\theta_{2R}(x_j-x_1)\Big) \prod_{k\ne 1,i,j} \theta_{2R} (x_j-x_k) .
\end{align*}
Note that 
$$W_c\ge 0, \: W_d\ge 0 \mbox{ and } \wh_1W_c = W_c \wh_1 \ge 0.$$
On the other hand, by the Cauchy-Schwarz inequality \eqref{eq:Cauchy-Schwarz} again,
\begin{align} \label{eq:T2-1}
 \pm ( \wh_1 W_d + W_d \wh_1) \le \delta \wh_1 W_d \wh_1 + \delta^{-1} W_d.
\end{align}
We have two different ways to bound $W_d$. First, by \eqref{eq:W1} and \eqref{eq:wh-lower},
\begin{align*}  \Big(1-\theta_{2R}(x_j-x_1)\Big) \le C\| 1-\theta_{2R} \|_{L^{3/2}}(1-\Delta_1) \le C_{\eps} R^2 \wh_1.
\end{align*} 
Since here $i,j\geq 2$, both sides of the latter estimate commute with 
$$ U_R(x_i-x_j) \prod_{k\ne 1,i,j} \theta_{2R} (x_j-x_k),$$
and we deduce that  
\begin{align*}
&  \Big(1-\theta_{2R}(x_j-x_1)\Big) U_R(x_i-x_j) \prod_{k\ne 1,i,j} \theta_{2R} (x_j-x_k) \\
&\le C_{\eps} R^2 \wh_1 U_R(x_i-x_j) \prod_{k\ne 1,i,j} \theta_{2R} (x_j-x_k).
\end{align*}
Taking the sum over $i,j\ge 2$ we obtain  
\begin{align} \label{eq:T2-2}
W_d  \le C_{\eps}R^2 \wh_1 W_c.
\end{align}
Second, let us show that
\begin{align}\label{eq:T2-3} W_d \le \frac{C}{R^{3}}.
\end{align}
Indeed, for every given $(x_1,x_2,...,x_N)\in (\R^3)^N$, the product 
$$
U_R(x_i-x_j) \Big(1-\theta_{2R}(x_j-x_1)\Big) \prod_{k\ne 1,i,j} \theta_{2R} (x_j-x_k)
$$
is zero except in the case
\begin{align}\label{eq:packing} |x_i-x_j|< R, |x_j-x_1|< 4R, \min_{k\ne 1,i,j}|x_j-x_k|>2R.
\end{align}
By the triangle inequality, \eqref{eq:packing} implies that the ball $B(x_1,5R)$ contains $B(x_i,R/2)$, $B(x_j,R/2)$, and the balls $B(x_i,R/2)$, $B(x_j,R/2)$ do not intersect with $B(x_k,R/2)$ for all $k\ne 1,i,j$. Since $B(x_1,5R)$ can contain only a finite number of disjoint balls of radius $R/2$, we see that there are only a finite number of pairs $(i,j)$ satisfying \eqref{eq:packing}.  Thus we can conclude that
\begin{align*} W_d \le C \|U_R\|_{L^\infty} \le CR^{-3}.
\end{align*}
From \eqref{eq:T2-1}, \eqref{eq:T2-2} and \eqref{eq:T2-3}, we obtain
\begin{align*}
\wh_1 W_b+ W_b \wh_1 &= \wh_1 W_d+ W_d \wh_1 + 2 \wh_1 W_c  \\
&\ge - \frac{C\delta}{R^{3}} (\wh_1)^2 + \left(2-\frac{C_{\eps}R^2}{\delta} \right) \wh_1 W_c . 
\end{align*}
Choosing $\delta \sim R^2$ we get
\begin{align} \label{eq:T2}
\wh_1 W_b + W_b \wh_1 \ge - \frac{C_{\eps}}{R} (\wh_1)^2 .
\end{align}
{\bf Conclusion.} From \eqref{eq:T1} and \eqref{eq:T2}, we get
\begin{align*} 
\wh_1 W_N + W_N \wh_1 &\ge -\left( \frac{C}{NR^3} + \frac{C_{\eps}}{R} \right)  (\wh_1)^2 \\
&\quad - 2N \sum_{1\in \{i,j\}} U_R(x_i-x_j) \prod_{k\ne i,j} \theta_{2R}(x_j-x_k). 
\end{align*}
Summing the similar estimates with $1$ replaced by $\ell$ and using 
$$
\sum_{\ell=1}^N \sum_{\ell \in \{i,j\}} U_R(x_i-x_j) \prod_{k\ne i,j} \theta_{2R}(x_j-x_k) = 2W_N
$$
we find that
\begin{align*} 
\sum_{\ell=1}^N \Big( \wh_\ell W_N + W_N \wh_\ell \Big) &\ge -\left( \frac{C}{NR^3} + \frac{C_{\eps}}{ R} \right)  \sum_{\ell=1}^N (\wh_\ell)^2 - 2NW_N. 
\end{align*}
Therefore, coming back to \eqref{eq:h1h2-0} we conclude that (completing a square in the last inequality)
\begin{align*} 
&(\wH_N)^2  -  \Big(\sum_{j=1}^N \wh_j\Big)^2 = \frac{(1-\eps)^2}{N}\sum_{\ell=1}^N \Big( \wh_\ell W_N + W_N \wh_\ell \Big) + \frac{(1-\eps)^4}{N^2}W_N^2 \\
&\ge -\left( \frac{C}{N^2R^3} + \frac{C_{\eps}}{N R} \right)  \sum_{\ell=1}^N (\wh_\ell)^2 - 2(1-\eps)^2 W_N + \frac{(1-\eps)^4}{N^2}W_N^2 \\
&\ge -\left( \frac{C}{N^2R^3} + \frac{C_{\eps}}{N R} \right)  \sum_{\ell=1}^N (\wh_\ell)^2 - N^2.
\end{align*}
When $R\gg N^{-2/3}$ we have 
$$ \frac{C}{N^2R^3} + \frac{C_{\eps}}{N R} \ll 1$$
and hence
\begin{equation*} 
(\wH_N)^2  \geq 2 \sum_{1\leq i < j \leq N} \wh_i \wh_j + (1-o(1)) \sum_{\ell=1}^N (\wh_\ell)^2 - N^2   
\end{equation*}
which yields the result, recalling that in our convention $\wh \geq 1$.
\end{proof}

\subsection{Three-body estimate}\label{sec:3-body}

A first consequence of the second moment estimate in Lemma \ref{lem:a-priori-estimate} is that we can conveniently bound Dyson's Hamiltonian from below by a two-body Hamiltonian. This is done by first using a simple bound in terms of a three-body Hamiltonian, and then bounding the unwanted three-body part.

\begin{lemma}[\textbf{Three-body estimate}]\label{lem:3-body} \mbox{}\\
Assume the extra condition \eqref{eq:assumption-A-V} holds. For every $1>\eps>0$ and $s>0$, if  $R=R(N) \gg N^{-2/3}$, then
\begin{align}\label{eq:3-body} \sum_{i\ne j} U_R(x_i-x_j)\sum_{k\ne i,j} (1-\theta_{2R}(x_j-x_k)) \le C_{\eps,s}\frac{R^2}{N} (\wH_N)^4. 
\end{align}
Consequently, 
\begin{align}\label{eq:cut-off-remove}
\wH_N \ge \sum_{j=1}^N \wh_j + \frac{(1-\eps)^2}{N} \sum_{i\ne j}U_R(x_i-x_j) - C_{\eps,s}\frac{R^2}{N^{2}} (\wH_N)^4.
\end{align}
\end{lemma}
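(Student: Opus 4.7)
Inequality~\eqref{eq:cut-off-remove} is an algebraic consequence of~\eqref{eq:3-body}: by the elementary bound $\prod_{k\ne i,j}\theta_{2R}(x_j-x_k)\ge 1-\sum_{k\ne i,j}(1-\theta_{2R}(x_j-x_k))$ already recalled before~\eqref{eq:WN-UR}, we have $W_N\ge\sum_{i\ne j}U_R(x_i-x_j)-T$, where $T$ denotes the left-hand side of~\eqref{eq:3-body}; substituting into the definition of $\wH_N$ and applying~\eqref{eq:3-body} to the subtracted term yields~\eqref{eq:cut-off-remove}. So the core is the operator inequality~\eqref{eq:3-body}, which I would attack in two stages.

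\emph{Stage 1: the intermediate bound $T\le C_\eps R^2 S^3$ with $S:=\sum_{j=1}^N\wh_j$.} For each distinct triple $(i,j,k)$ the multiplication operators $U_R(x_i-x_j)$ and $1-\theta_{2R}(x_j-x_k)$ commute, and $\wh_k$ commutes with both. Since $\|1-\theta_{2R}\|_{L^{3/2}(\R^3)}\le CR^2$, Sobolev's inequality~\eqref{eq:W1} applied in the $x_k$ variable and the lower bound~\eqref{eq:wh-lower} give $1-\theta_{2R}(x_j-x_k)\le C_\eps R^2\wh_k$, which after sandwiching by $\sqrt{U_R(x_i-x_j)}$ yields $U_R(x_i-x_j)(1-\theta_{2R}(x_j-x_k))\le C_\eps R^2\wh_k U_R(x_i-x_j)$. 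Regrouping the double sum with $\wh_k$ on the left and $U^{(k)}:=\sum_{i\ne j,\,i,j\ne k}U_R(x_i-x_j)$ on the right (the two commute), and then invoking~\eqref{eq:W2} to dominate $U^{(k)}\le C_\eps\sum_{i\ne j,\,i,j\ne k}\wh_i\wh_j\le C_\eps S^2$ (which also commutes with $\wh_k$), I arrive at $T\le C_\eps R^2 S^3$.

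\emph{Stage 2: from $S^3$ to $(R^2/N)(\wH_N)^4$.} Since each $\wh_j\ge 1$, the operator $S\ge N$, and therefore $S^3\le S^4/N$ by spectral calculus on the commutative algebra generated by $S$. It remains to establish $S^4\le C_{\eps,s}(\wH_N)^4$. Sandwiching the second-moment estimate $S^2\le 3(\wH_N)^2$ of Lemma~\ref{lem:a-priori-estimate} by $\wH_N$ gives $\wH_N S^2\wH_N\le 3(\wH_N)^4$; writing $\wH_N S=S\wH_N+[\wH_N,S]$ with $[\wH_N,S]=\frac{(1-\eps)^2}{N}[W_N,S]$ then reduces the problem to controlling the commutator. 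The latter is a mixed kinetic-potential object, and inequality~\eqref{eq:W3} of Lemma~\ref{lem:W} together with the operator Cauchy--Schwarz bound~\eqref{eq:Cauchy-Schwarz} should deliver $\|[\wH_N,S]\Psi\|^2\le C_{\eps,s}\|\wH_N\Psi\|^2$, which combined with $\|S\wH_N\Psi\|^2\le 3\|(\wH_N)^2\Psi\|^2$ (a second application of the moment estimate) produces $S^4\le C_{\eps,s}(\wH_N)^4$.

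\textbf{Main obstacle.} The delicate step is Stage 2: because operator squaring is not monotone, the fourth-moment bound $S^4\le C(\wH_N)^4$ cannot be obtained by simply squaring the second-moment bound. The needed commutator control on $[\wH_N,S]$ is precisely the kind of mixed term that inequality~\eqref{eq:W3} of Lemma~\ref{lem:W} was crafted to handle, and I expect this estimate (with its $s$-dependent constant coming from the high-momentum part of $\wh$) to contain the bulk of the technical work.
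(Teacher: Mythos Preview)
Your Stage~1 is fine and matches the paper's opening move. The gap is in Stage~2: the commutator bound $\|[\wH_N,S]\Psi\|^2\le C_{\eps,s}\|\wH_N\Psi\|^2$ is asserted but not established, and inequality~\eqref{eq:W3} is not the right tool for it. That inequality bounds the \emph{anti}commutator $\wh_x W(x-y)+W(x-y)\wh_x$ from below for a \emph{two-body} potential $W(x-y)$, whereas here you would need to control the square of the commutator of $S$ with the genuinely many-body Dyson potential $W_N$, which contains the $N$-fold products $\prod_{k\ne i,j}\theta_{2R}(x_j-x_k)$. Differentiating such a product via $[\wh_\ell,\,\cdot\,]$ generates a large number of terms, and nothing in Lemma~\ref{lem:W} tells you that the resulting operator is dominated by $(\wH_N)^2$ uniformly in $N$ and $R$. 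Since operator squaring is not monotone, the passage from $S^2\le 3(\wH_N)^2$ to $S^4\le C(\wH_N)^4$ genuinely requires such extra input, and you have not supplied it.

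The paper sidesteps this obstacle by introducing $\wH_N$ one step earlier, so that the second moment estimate can be applied in \emph{sandwiched} form rather than iterated. After your first move $1-\theta_{2R}(x_j-x_k)\le C_\eps R^2\wh_k$, do \emph{not} yet bound $U_R(x_i-x_j)$ by $\wh_i\wh_j$; instead write $\sum_{k\ne i,j}\wh_k=\wH_N-\wh_i-\wh_j-(1-\eps)^2N^{-1}W_N$ and drop the last (nonnegative) term. This produces the symmetrized expression $\wH_N U_R(x_i-x_j)+U_R(x_i-x_j)\wH_N$ together with two cross terms $\wh_\ell U_R+U_R\wh_\ell$ for $\ell\in\{i,j\}$, and the latter \emph{are} controlled by~\eqref{eq:W3} (giving $-C_{\eps,s}R^{-3/2}\wh_i\wh_j$). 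For the main term, the operator Cauchy--Schwarz inequality~\eqref{eq:Cauchy-Schwarz} yields $\wH_N U_R+U_R\wH_N\le \delta\,\wH_N U_R\wH_N+\delta^{-1}U_R$; now bound $U_R\le C_{\eps,s}\wh_i\wh_j$ via~\eqref{eq:W2}. With $\delta=N^{-1}$ and after summing over $i\ne j$ one obtains
\[
T\le C_{\eps,s}R^2\Bigl(N^{-1}\,\wH_N\Bigl(\sum_{i\ne j}\wh_i\wh_j\Bigr)\wH_N+N\sum_{i\ne j}\wh_i\wh_j\Bigr).
\]
Since conjugation preserves operator inequalities, the second moment estimate $\sum_{i\ne j}\wh_i\wh_j\le S^2\le 3(\wH_N)^2$ may be conjugated by $\wH_N$ to give $\wH_N\bigl(\sum_{i\ne j}\wh_i\wh_j\bigr)\wH_N\le 3(\wH_N)^4$, and combined with $\wH_N\ge S\ge N$ both terms become $C_{\eps,s}R^2N^{-1}(\wH_N)^4$. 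No commutator control is needed: the point is to arrange for $\wH_N$ to sit on the \emph{outside} of the two-body piece before invoking~\eqref{eq:h1h2}, rather than trying to convert powers of $S$ into powers of $\wH_N$ after the fact.
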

Note the error term involving $(\wH_N)^4$, which is well under control since we are interested in its expectation value in a ground state.
\begin{proof} By \eqref{eq:W1} and \eqref{eq:wh-lower} we have 
$$(1-\theta_{2R}(x_2-x_k))\le C_{\eps,s}R^2 \wh_k\mbox{ for } k\ge 3.$$
Since $U_R(x_1-x_2)$ commutes with both sides, we get
\begin{align} \label{eq:3-body-a}
& U_R(x_1-x_2)\sum_{k\ge 3} (1-\theta_{2R}(x_2-x_k)) \le C_{\eps,s}R^2 U_R(x_1-x_2)\sum_{k\ge 3} \wh_3 \nn\\
&= \frac{1}{2}C_{\eps,s} R^2  \Big( \wH_N - \wh_1 -\wh_2 - (1-\eps)^2 N^{-1}W_N \Big)U_R(x_1-x_2) \nn\\
& \quad +\frac{1}{2}C_{\eps,s} R^2 U_R(x_1-x_2)\Big( \wH_N - \wh_1 -\wh_2 - (1-\eps)^2 N^{-1}W_N \Big)  \nn\\
&\le \frac{1}{2}C_{\eps,s} R^2 \left( \wH_N U_R(x_1-x_2)  +  U_R(x_1-x_2)\wH_N \right) \nn\\
&\quad +  \frac{1}{2} C_{\eps,s} R^2 \sum_{j=1}^2 \left( \wh_j U_R(x_1-x_2)  + U_R(x_1-x_2) \wh_j \right)\,.
\end{align}
In the last estimate we have used $W_N \ge 0$. Thanks to \eqref{eq:W3} and \eqref{eq:wh-lower}, we get for all $j=1,2,$
\begin{align} \label{eq:3-body-c}
&\wh_j U_R(x_1-x_2)+U_R(x_1-x_2)h_j \nn\\
&\ge -C_{\eps,s} R^{-3/2} (1-\Delta_1)(1-\Delta_2) \ge - C_{\eps,s} R^{-3/2}\wh_1 \wh_2.
\end{align}
On the other hand, by  the Cauchy-Schwarz inequality \eqref{eq:Cauchy-Schwarz} and \eqref{eq:W2} (with $\delta=0$ and $W=U_R$) 
\begin{align}\label{eq:3-body-d} &\wH_N  U_R(x_1-x_2) + U_R(x_1-x_2) \wH_N \nn \\
&\le \delta \wH_N  U_R(x_1-x_2) \wH_N + \delta^{-1} U_R(x_1-x_2) \nn\\
&\le C_{\eps,s} \delta \wH_N \wh_1 \wh_2 \wH_N + C_{\eps,s} \delta^{-1} \wh_1 \wh_2
\end{align}
for all $\delta>0$. Choosing $\delta=N^{-1}$ and using $R^{-3/2}\le N$, we deduce from \eqref{eq:3-body-a}, \eqref{eq:3-body-c} and \eqref{eq:3-body-d} that
\begin{align*}
U_R(x_1-x_2)\sum_{k\ge 3} (1-\theta_{2R}(x_2-x_k)) 
\le C_{\eps,s} R^2 \Big( N^{-1}\wH_N \wh_1 \wh_2 \wH_N +  N \wh_1 \wh_2 \Big). 
\end{align*}
By symmetrization with respect to the indices, we find that
\begin{align*}
&\sum_{i\ne j}U_R(x_1-x_2)\sum_{k\ne i,j} (1-\theta_{2R}(x_j-x_k)) \\
&\le C_{\eps,s} R^2 \Big( N^{-1}\wH_N \sum_{i\ne j}\wh_i \wh_j \wH_N +  N \sum_{i\ne j} \wh_i \wh_j \Big) . 
\end{align*}
Combining with the second moment estimate \eqref{eq:h1h2} we obtain \eqref{eq:3-body}. From the three-body estimate \eqref{eq:3-body} and the elementary inequality \eqref{eq:WN-UR}, the operator bound \eqref{eq:cut-off-remove} follows. 
\end{proof}

\section{Energy lower bound and convergence of states}\label{sec:low bound}

\subsection{Mean-field approximation} \label{sec:MF}
We are now reduced to justifying the mean-field approximation for a new Hamiltonian with the two-body interaction $U_R(x-y)$ which converges to a Dirac delta much slower than the original one. 
%, but still fast enough to make new estimates necessary if one is to put the de Finetti theorem to good use. 
The analysis in this section provides an alternative to the coherent states method of~\cite{LieSei-06}. 

\begin{proposition}[\textbf{Mean-field approximation}]\label{pro:MF}\mbox{}\\
Assume that~\eqref{eq:assumption-A-V} holds. For every $1>\eps>0$ and $s>0$, if 
$$N^{-1/2} \gg R=R(N) \gg N^{-2/3}$$
then
\begin{align} \label{eq:MF} \lim_{N\to \infty} \frac{\inf\sigma(\wH_N)}{N} =  \inf_{\|u\|_{L^2}=1} \left( \langle u, \wh u \rangle +  (1-\eps)^2 4\pi a \int |u|^4 \right)=:e_{{\rm NL}}(\eps,s). 
\end{align}
\end{proposition}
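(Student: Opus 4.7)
\textbf{Upper bound.} I would test $\wH_N$ against $u^{\otimes N}$ with $u$ smooth and $\|u\|_{L^2}=1$. The one-body part contributes exactly $N\langle u,\wh u\rangle$. For the Dyson interaction $N^{-1}(1-\eps)^2 W_N$, a direct expansion of $\langle u^{\otimes N},W_N u^{\otimes N}\rangle$ shows that the nearest-neighbor cut-off product factorizes into $[\int \theta_{2R}(y-z)|u(z)|^2 dz]^{N-2} = (1-O(R^3))^{N-2}\to 1$ (using $NR^3\to 0$, which follows from $R\ll N^{-1/2}$). Together with $\int U_R = 4\pi a$ and $U_R\rightharpoonup 4\pi a\,\delta_0$ as $R\to 0$, this gives
\[\limsup_{N\to\infty}\frac{\langle u^{\otimes N},\wH_N u^{\otimes N}\rangle}{N}\le \langle u,\wh u\rangle + (1-\eps)^2\cdot 4\pi a\int|u|^4,\]
and infimizing over $u$ produces the desired upper bound.

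\textbf{Reduction to a two-body Hamiltonian.} Let $\wPsi_N$ be an exact ground state of $\wH_N$ (which exists because $\wh$ has compact resolvent). The upper bound just obtained yields $E_N:=\inf\sigma(\wH_N)\le C_{\eps,s}N$, hence $\langle\wPsi_N,(\wH_N)^4\wPsi_N\rangle = E_N^4 \le C_{\eps,s}N^4$. Plugging this into \eqref{eq:cut-off-remove}, the three-body error term is bounded by $C_{\eps,s}R^2 N^2 = o(N)$ since $R\ll N^{-1/2}$. It therefore suffices to prove the mean-field asymptotics for the genuinely two-body Hamiltonian
\[K_N := \sum_{j=1}^N \wh_j + \frac{(1-\eps)^2}{N}\sum_{i\ne j} U_R(x_i-x_j).\]

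\textbf{Energy localization and de Finetti.} For $\Lambda>0$ set $P_\Lambda := \1(\wh\le\Lambda)$ and $Q_\Lambda := 1-P_\Lambda$. Since $Q_\Lambda\le \Lambda^{-1}\wh$, the bound $\langle \wH_N\rangle\le C_{\eps,s}N$ already gives $\Tr[Q_\Lambda \gamma_{\wPsi_N}^{(1)}] \le C_{\eps,s}\Lambda^{-1}$, while the second moment bound \eqref{eq:intro-h1h2} yields $\Tr[\wh\otimes\wh\,\gamma_{\wPsi_N}^{(2)}]\le C_{\eps,s}$. Applying \eqref{eq:W2} to $W=U_R$ with some $\delta\in(0,1/4)$ (so that only $\|U_R\|_{L^1}=4\pi a$ enters, not the singular $L^\infty$-norm) together with $Q_\Lambda\wh^{1-\delta}\le \Lambda^{-\delta}\wh$, the portion of the interaction energy absorbed by any $Q_\Lambda$-factor is bounded by $C_{\eps,s}\Lambda^{-\delta}$. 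Consequently, replacing $\wPsi_N$ by its projection on $\bigotimes_{\rm sym}^N P_\Lambda\gH$ alters $N^{-1}\langle K_N\rangle$ by $o_\Lambda(1)$ \emph{uniformly in $N$ and $R$}. On the projected sequence, $\gamma^{(1)}_{\wPsi_N}$ is bounded in trace norm, so Theorem \ref{thm:DeFinetti} yields, along a subsequence, a Borel probability measure $\mu_\Lambda$ on the unit sphere of $P_\Lambda\gH$ with $\gamma_{\wPsi_N}^{(k)}\to \int |u^{\otimes k}\rangle\langle u^{\otimes k}|\,d\mu_\Lambda(u)$ in trace norm. Expanding $\langle K_N\rangle$ in $\gamma^{(1)}$ and $\gamma^{(2)}$ and using weak lower semi-continuity for the one-body term gives
\[\liminf_{N\to\infty}\frac{\inf\sigma(\wH_N)}{N}\ge \int\!\left(\langle u,\wh u\rangle + (1-\eps)^2\iint U_R(x-y)|u(x)|^2|u(y)|^2 dxdy\right) d\mu_\Lambda(u) - o_\Lambda(1).\]

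\textbf{Closing the limits and main obstacle.} Each $u$ in the support of $\mu_\Lambda$ lies in $H^1(\R^3)$ (by \eqref{eq:wh-lower}), so $U_R\rightharpoonup 4\pi a\,\delta_0$ lets us pass $\iint U_R|u|^2|u|^2 \to 4\pi a\|u\|_{L^4}^4$ as $R\to 0$. The integrand is then pointwise bounded below by $e_{\rm NL}(\eps,s)$, and a final $\Lambda\to\infty$ limit completes the argument. The crucial step, and the place where the improvement over~\cite{LewNamRou-14c} enters, is the localization: because $\|U_R\|_\infty = O(R^{-3})$ is too singular for the quantitative finite-dimensional de Finetti arguments of that paper in the regime $R\gg N^{-2/3}$, one cannot afford any $R$-dependent error when projecting; the strong second moment estimate \eqref{eq:intro-h1h2}, combined with \eqref{eq:W2} at $\delta>0$, is precisely what allows the $P_\Lambda$-localization to be performed with an error depending only on $\Lambda$.
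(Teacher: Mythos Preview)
Your proposal captures the paper's strategy: upper bound via factorized trial states, reduction to the two-body Hamiltonian $K_N$ via \eqref{eq:cut-off-remove}, and a de Finetti argument combined with a $P_\Lambda$-localization whose error is controlled through the second moment estimate and \eqref{eq:W2} with $\delta>0$. The key insight you identify (that \eqref{eq:W2} at strictly positive $\delta$ yields an $R$-independent localization error) is exactly the paper's.

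There are, however, two genuine imprecisions in your execution. First, you should not literally project the \emph{state} $\wPsi_N$ onto $\bigotimes_{\rm sym}^N P_\Lambda\gH$: the norm $\|P_\Lambda^{\otimes N}\wPsi_N\|^2$ is only bounded below by $1 - N\Tr[Q_\Lambda\gamma^{(1)}_{\wPsi_N}]\ge 1-C_{\eps,s}N\Lambda^{-1}$, which is useless for fixed $\Lambda$ and $N\to\infty$. What the paper does instead is project only the \emph{two-body interaction operator}, replacing $U_R$ by $P_\Lambda^{\otimes 2}U_R P_\Lambda^{\otimes 2}$ in $\Tr[U_R\gamma^{(2)}_{\wPsi_N}]$ with the error you correctly describe; de Finetti is then applied directly to the unprojected $\gamma^{(k)}_{\wPsi_N}$ (which already converge strongly in trace class, since $\wh$ has compact resolvent), yielding a single measure $\wmu$ on $S\gH$ independent of $\Lambda$. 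This both avoids the normalization problem and dispenses with your $\Lambda$-dependent measures $\mu_\Lambda$.

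Second, your displayed lower bound retains $U_R$ (with $R=R(N)$) after $\liminf_{N\to\infty}$ has been taken, which is not meaningful as written. The paper separates this into two steps: since $\|P_\Lambda^{\otimes 2}U_R P_\Lambda^{\otimes 2}\|$ is bounded uniformly in $N$ (from \eqref{eq:UR-h1h2}), trace-norm convergence of $\gamma^{(2)}_{\wPsi_N}$ gives
\[
\lim_{N\to\infty}\Big(\Tr\big[P_\Lambda^{\otimes 2}U_R P_\Lambda^{\otimes 2}\,\gamma^{(2)}_{\wPsi_N}\big] - \int\big\langle (P_\Lambda u)^{\otimes 2},\,U_R\,(P_\Lambda u)^{\otimes 2}\big\rangle\,d\wmu(u)\Big)=0,
\]
with both terms still $N$-dependent through $R$; only afterwards is Fatou's lemma invoked to pass $R\to 0$ inside the $\wmu$-integral and then send $\Lambda\to\infty$. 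Your sketch conflates these limits.
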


\begin{proof} The upper bound in \eqref{eq:MF} can be obtained easily using trial states of the form $u^{\otimes N}$. For the lower bound, let us consider a ground state $\wPsi_N$ of $\wH_N$ (which exists because $\wh$ has compact resolvent). Using the ground state equation, we find that 
\begin{align} \label{eq:wPsi-Hk}\langle \wPsi_N, (\wH_N)^k \wPsi_N\rangle=(\inf \sigma(\wH_N))^k\le (C_{\eps,s}N)^k
\end{align}
for all $k\in \mathbb{N}$. In particular, the second moment estimate \eqref{eq:h1h2} implies that
\begin{align}\label{eq:wPsi-h1h2} \langle \wPsi_N, \wh_1 \wh_2 \wPsi_N \rangle \le C_{\eps,s}
\end{align}
and the operator estimate \eqref{eq:cut-off-remove} implies that
\begin{align}\label{eq:cut-off-remove-b}
\liminf_{N\to \infty}\frac{\langle \wPsi_N,\wH_N \wPsi_N \rangle}{N} \ge \liminf_{N\to \infty} \left( \Tr \Big(\wh \gamma_{\wPsi_N}^{(1)}\Big)+ (1-\eps)^2\Tr \Big( U_R \gamma_{\wPsi_N}^{(2)}\Big) \right). 
\end{align}
Here $\gamma_{\wPsi_N}^{(k)}$ is the $k$-particle density matrices of $\wPsi_N$ and $U_R$ is understood as the multiplication operator $U_R(x-y)$ on $\gH^2$. Since $\Tr \Big( \wh \gamma_{\wPsi_N}^{(1)}\Big)$ is bounded uniformly in $N$ and $\wh$ has compact resolvent, up to a subsequence we can assume that $\gamma_{\wPsi_N}^{(1)}$ converges strongly in trace class. By the quantum de Finetti Theorem \ref{thm:DeFinetti}, up to a subsequence we can find a Borel probability measure $\wmu$ on the unit sphere $S\gH$ such that
\begin{align} \label{eq:deF-wPsi}
\lim_{N\to \infty} \Tr \left| \gamma_{\wPsi_N}^{(k)} - \int |u^{\otimes k}\rangle \langle u^{\otimes k}| d\wmu(u)\right| =0,\quad \forall k\in \mathbb{N}.
\end{align}
We will show that
\begin{align}\label{eq:lower-deF}
&\liminf_{N\to \infty} \left( \Tr \Big(\wh \gamma_{\wPsi_N}^{(1)}\Big)+ (1-\eps)^2\Tr \Big( U_R \gamma_{\wPsi_N}^{(2)}\Big) \right) \nn\\
&\ge \int \Big( \langle u, \wh u \rangle +  (1-\eps)^2 4\pi a \int |u|^4 \Big) d\wmu(u)
\end{align}
and then the lower bound in \eqref{eq:MF} follows immediately. Since $\wh$ is positive and independent of $N$, \eqref{eq:deF-wPsi}  and Fatou's lemma imply
\begin{align} \label{eq:deF-wPsi-h}
\liminf_{N\to \infty} \Tr \Big( \wh\gamma_{\wPsi_N}^{(1)} \Big) \ge \int \langle u, \wh u \rangle d\wmu(u).
\end{align}
It remains to prove
\begin{align}\label{eq:deF-wPsi-U}
\liminf_{N\to \infty} \Tr \Big( U_R \gamma_{\wPsi_N}^{(2)} \Big) \ge 4\pi a \int \|u\|^4_{L^4} d\wmu(u).
\end{align}
Note that \eqref{eq:deF-wPsi-U} does not follow from \eqref{eq:deF-wPsi} and Fatou's lemma easily because $U_R$ depends on $R=R(N)$. We need to replace $U_R$ by an  operator bounded independently of $N$. Since $\wh$ has compact resolvent, for every $\Lambda \ge 1$ the projection 
$$P_\Lambda:= \1(\wh \le \Lambda)$$
has finite rank. Let us denote 
$$ \Pi := \1_{\gH ^2} - P_\Lambda ^{\otimes 2}.$$ 
Since $U_R \geq 0$, we can apply the Cauchy-Schwarz inequality~\eqref{eq:Cauchy-Schwarz} with 
$$X= \PL^{\otimes 2} U_R ^{1/2} \mbox{ and } Y= U_R ^{1/2} \Pi$$
to obtain
\begin{align*}
U_R &= (\PL ^{\otimes 2} + \Pi) U_R (\PL ^{\otimes 2} + \Pi) \\
&= \PL ^{\otimes 2} U_R \PL ^{\otimes 2} + \Pi U_R \Pi + \PL ^{\otimes 2} U_R \Pi + \Pi U_R \PL ^{\otimes 2} \\
&\geq \PL ^{\otimes 2} U_R \PL ^{\otimes 2}  - \delta ^{-1} \Pi U_R \Pi - \delta \PL ^{\otimes 2} U_R \PL ^{\otimes 2}
\end{align*}
for all $\delta>0$. Using the operator bound \eqref{eq:W2} and the fact that the $4/5$-th power is operator monotone~\cite{Bhatia} we have 
\begin{align} \label{eq:UR-h1h2} 
U_R(x_1-x_2) \le C\|U_R\|_{L^1} (1-\Delta_1)^{4/5}(1-\Delta_2)^{4/5} \le C_{\eps,s} (\wh_1)^{4/5}(\wh_2)^{4/5}.
\end{align}
Therefore,
$$ \PL ^{\otimes 2} U_R \PL ^{\otimes 2} \le C_{\eps,s} \wh_1 \wh_2 \quad \text{and}\quad \Pi U_R \Pi \leq C_{\eps,s} \Lambda^{-1/5} \wh_1 \wh_2.$$
Here in the second estimate we have used $\1_{\gH } - \PL  \le \Lambda^{-1/5} (\wh)^{1/5}$, which is a consequence of the definition of $\PL$. Thus
$$
U_R - \PL^{\otimes 2}U_R \PL^{\otimes 2} \ge -C_{\eps,s} (\delta^{-1} + \delta \Lambda ^{-1/5}) \wh_1 \wh_2.
$$
If we choose $\delta = \Lambda ^{-1/10}$ and take the trace against $\gamma_{\wPsi_N}^{(2)}$, then by the a-priori estimate \eqref{eq:wPsi-h1h2} we find
\begin{align} \label{eq:UR-a}
\Tr \Big( U_R \gamma_{\wPsi_N}^{(2)} \Big) -  \Tr \Big( \PL^{\otimes 2}U_R \PL^{\otimes 2} \gamma_{\wPsi_N}^{(2)} \Big) \ge - C_{\eps,s} \Lambda^{-1/10}.
\end{align}
On the other hand, from \eqref{eq:UR-h1h2} and the definition of $\PL$, it follows that the operator norm $\| \PL^{\otimes 2}U_R \PL^{\otimes 2} \|$ is bounded uniformly in $N$ for fixed $\Lambda$. Therefore, the strong convergence \eqref{eq:deF-wPsi} implies that
\begin{align} \label{eq:UR-b}
\lim_{N\to \infty}\left( \Tr \Big( \PL^{\otimes 2}U_R \PL^{\otimes 2} \gamma_{\wPsi_N}^{(2)} \Big) - \int \left\langle (\PL u)^{\otimes 2}, U_R (\PL u)^{\otimes 2}\right \rangle d\wmu(u) \right) =0.
\end{align} 
Since the left side of \eqref{eq:deF-wPsi-h} is finite, every function $u$ in the support of $d\wmu$ belongs to the quadratic form domain $Q(\wh)$ of $\wh$ and hence $\PL u\to u$ strongly in $Q(\wh)$. Using the continuous embeddings $Q(\wh)\subset H^1 \subset L^4$, we get
$$
\lim_{\Lambda\to \infty} \lim_{R\to 0} \langle (\PL u)^{\otimes 2}, U_R (\PL u)^{\otimes 2}\rangle = \lim_{\Lambda\to \infty} \|\PL u\|_{L^4}^4 = \|u\|_{L^4}^4.
$$
By Fatou's lemma,
\begin{align}\label{eq:UR-c}\liminf_{\Lambda\to \infty}\liminf_{N\to \infty} \int \langle (\PL u)^{\otimes 2}, U_R (\PL u)^{\otimes 2}\rangle d\wmu(u) \ge 4\pi a \int \|u\|_{L^4}^4 d\wmu(u).
\end{align}
The desired convergence \eqref{eq:deF-wPsi-U} follows from \eqref{eq:UR-a}, \eqref{eq:UR-b} and \eqref{eq:UR-c}.  
\end{proof}

\begin{remark}[Mean-field approximation with two-body interactions]\label{rem:simpler MF}\mbox{}\\ 
From the preceding proposition we obtain easily the convergence \eqref{eq:Hartree-cv} mentioned in Section \ref{sec:overall} because $\wH_N\le K_N$. In fact, $K_N$ satisfies the second moment estimate (\ref{eq:simpler}) (cf. Remark~\ref{rem:simpler mom bound}), and hence \eqref{eq:Hartree-cv} can be proved directly. In particular, the method can be used to derive the energy asymptotics when the interaction potential is given by (\ref{eq:wbeta}); for $\beta<2/3$, Step 1 (and thus also Step 3) in the proof are not needed. One can also obtain some explicit error estimate in Proposition~\ref{pro:MF} and \eqref{eq:Hartree-cv} by using a quantitative version of the quantum de Finetti theorem as in \cite[Lemma~3.4]{LewNamRou-14c}.      
\end{remark}

\subsection{Convergence of ground state energy} \label{sec:GP-energy}

We now conclude the proof of the convergence of the ground state energy. There are two things left to do: remove the high momentum cut-off in the final effective functional, and relax the additional assumption~\eqref{eq:assumption-A-V}.

\begin{proof}[Proof of energy convergence \eqref{eq:cv-energy}] The upper bound in \eqref{eq:cv-energy} was proved in~\cite{Sei-03}. The proof of the lower bound  is divided into three steps.  

\medskip

\noindent{\bf Step 1.} We start with the simple case when the extra condition \eqref{eq:assumption-A-V} holds true. Recall that we are choosing 
$$N^{-1/2}\gg R=R(N) \gg N^{-2/3}.$$
From Lemma \ref{lem:Dyson} and Proposition \ref{pro:MF} it follows that for every $1>\eps>0$ and~$s>0$,
$$
\liminf_{N\to \infty} \frac{\inf \sigma(H_N)}{N} \ge \liminf_{N\to \infty} \left( \frac{\inf\sigma(\wH_N)}{N} + \kappa_{\eps,s} \right) = e_{{\rm NL}}(\eps,s)+\kappa_{\eps,s}. 
$$
Thus to obtain the lower bound in \eqref{eq:cv-energy}, it remains to show that 
\begin{align}\label{eq:Enls-ke} \lim_{\eps\to 0}\lim_{s\to \infty} (e_{\rm NL}(\eps,s) + \kappa_{\eps,s})=e_{\rm GP}. 
\end{align}
The upper bound in \eqref{eq:Enls-ke} is trivial as $\cE_{\rm NL}(u)+\kappa_{\eps,s} \le \cE_{\rm GP}(u)$. The lower bound in \eqref{eq:Enls-ke} can be done by a standard compactness argument provided in \cite{LieSei-06}. We recall this here for the reader's convenience. Let $u_{\eps,s}$ be a ground state for $e_{\rm NL}(\eps,s)$, namely
$$
e_{\rm NL}(\eps,s)= \langle u_{\eps,s}, \wh  u_{\eps,s}   \rangle  + (1-\eps)^2 4\pi a \int |u_{\eps,s}|^4.
$$
From \eqref{eq:wh-lower} it follows that $\langle u_{\eps,s}, (-\Delta+V) u_{\eps,s}\rangle$ is bounded uniformly in $s$. Since $-\Delta+V$ has compact resolvent, for every given $\eps>0$ there exists a subsequence $s_j \to \infty$ such that $u_{\eps,s_j}$ converges strongly in $L^2$ and pointwise (in both $p$-space and $x$-space) to a function $u_\eps$. By Fatou's lemma we have
\begin{align*}
&\liminf_{j\to \infty} \int |u_{\eps,s_j}(x)|^4 dx \ge \int |u_{\eps}(x)|^4 dx,\\
&\liminf_{j\to \infty} \int p^2(1-\theta_{s_j}(p))|\hat u_{\eps,s_j}(p)|^2 dp \ge \int p^2|\hat u_\eps(p)|^2 dp. 
\end{align*}
Next, using~\eqref{eq:assumption-A-V} as before we have 
$$\eps p^2+ pA+Ap+|A|^2 + V + C_\eps \ge 0$$
for some $C_\eps \geq 0$. Using Fatou's lemma again and the strong convergence in $L^2$ we deduce
\begin{multline*}
\liminf_{j\to \infty} \left\langle u_{\eps,s_j}, \Big(\eps p^2+ pA+Ap+|A|^2 + V + \kappa_{\eps,s} \Big) u_{\eps,s_j} \right\rangle \\
\ge \left\langle u_{\eps}, \Big(\eps p^2+ pA+Ap+|A|^2 + V \Big) u_{\eps} \right\rangle.
\end{multline*}
Combining these estimates, we get
\begin{align*} \liminf_{j\to \infty} \Big(e_{\rm NL}(\eps,s_j)+\kappa_{\eps,s_j}\Big) \ge \langle u_{\eps}, h u_{\eps}\rangle + (1-\eps)^2 4\pi a \int |u_{\eps}|^4 \ge (1-\eps)^2 e_{\rm GP}.
\end{align*}
Taking $\eps\to 0$ we obtain the lower bound in \eqref{eq:Enls-ke}.

\medskip

\noindent{\bf Step 2.} From now on we do not assume \eqref{eq:assumption-A-V}. Let us introduce the Hamiltonian
$$
H_{M,N}:=\sum_{j=1}^M h_j + \sum_{1\leq i<j \leq M} w_N(x_i-x_j)
$$
and denote $E(M,N)$ its (bosonic) ground state energy. In this step we will prove the lower bound in \eqref{eq:cv-energy} using the additional assumption 
\begin{align} \label{eq:ass-binding}
E(N,N)-E(N-1,N) \le C.
\end{align}
We will find a function $f:\R^3\to \R_+$ growing faster than $|A|$, namely 
\begin{equation}\label{eq:growth f}
\lim_{|x|\to \infty} \frac{|A(x)|}{f(x)}=0 
\end{equation}
such that for a ground state $\Psi_N$ for $H_N$ we have 
\begin{equation}\label{eq:bounded f}
\langle \Psi_N, f^2(x_1) \Psi_N \rangle \le C.
\end{equation}
Once this is achieved we get
$$
\frac{\inf\sigma(H_N)}{N} \ge \frac{\inf\sigma\Big(H_N + \eta \sum_{j=1}^N f^2(x_j)\Big)}{N} - C\eta
$$
for every $\eta>0$. Since the growth condition \eqref{eq:assumption-A-V} holds true with $V$ replaced by $V+\eta f^2$, we can apply the result in Step 1 to the Hamiltonian 
$$H_N+\eta \sum_{j=1}^N f^2(x_j)$$
for every given $\eta>0$. Then the lower bound in \eqref{eq:cv-energy} follows by taking $\eta\to 0$. 

Now we find such a function $f$. We will establish a simple binding inequality using an idea in \cite{Lieb-84}. From the ground state equation $H_{N,N} \Psi_N =E(N,N)\Psi_N$, it follows that  
\begin{align} \label{eq:GSE-binding}
E(N,N) \langle \Psi_N, f^2(x_N) \Psi_N\rangle = \Re \langle \Psi_N, f^2(x_N) H_{N,N} \Psi_N \rangle.
\end{align}
By the variational principle and \eqref{eq:ass-binding} we have 
$$H_{N,N}-h_N \ge H_{N-1,N}\ge E(N-1,N) \ge E(N,N)-C.$$
Note that $f^2(x_N)$ commutes with all terms in the latter inequality. If $f$ is bounded and sufficiently regular, we have the IMS-type formula
\begin{equation}\label{eq:IMS}
\frac{1}{2}(f^2 h + hf^2)=fhf-|\nabla f|^2 \ge Vf^2-|\nabla f|^2, 
\end{equation}
and we deduce from \eqref{eq:GSE-binding} that
\begin{align} \label{eq:binding-EN-EN-1}
\left\langle \Psi_N,\Big( V(x_N)f^2(x_N)-|\nabla f(x_N)|^2 - C f^2(x_N) \Big) \Psi_N \right\rangle \le 0.
\end{align}
Note that if we choose $f(x)=e^{b|x|}$ for some constant $b>0$, then \eqref{eq:growth f} follows from the assumption \eqref{eq:assumption-A}. Moreover, heuristically \eqref{eq:bounded f} follows  from \eqref{eq:binding-EN-EN-1} as $Vf^2$ grows faster than $|\nabla f|^2+Cf^2$. To make this idea rigorous, let us apply \eqref{eq:binding-EN-EN-1} with $f(x)$ replaced by
$$ g_r(x)=\exp\left[b\left[r-\left||x|-r \right| \right]_+\right].$$
Note that $g_r(x)=e^{b|x|}$ when $|x|\le r$ and $g_r(x)=1$ when $|x|\ge 2r$. We can thus apply~\eqref{eq:IMS} to $g_r$.
% \footnote{We first apply it to $g_r - 1$, which has compact support, and deduce that the formula holds for $g_r$ as well.}.

Moreover,  
\begin{align*}
Vg_r^2 - |\nabla g_r|^2 - C g_r^2 &\ge (V-b^2 -C) g_r^2 \\
&\ge g_r^2 -  (b^2+C+1)g_r^2 \1 \left(V\le b^2+C+1\right) \\
&\ge g_r^2 - C_0 
\end{align*}
for some constant $C_0$ independent of $r>0$. Here we have used the fact that $g_r^2 \1(V\le b^2+C+1)$ is bounded independently of $r>0$, which follows from the assumption $\lim_{|x|\to \infty}V(x)=+\infty$. Thus  \eqref{eq:binding-EN-EN-1} gives us
\begin{align*} 
\left\langle \Psi_N, g_r (x_N) \Psi_N \right\rangle \le C_0
\end{align*}
for all $r>0$. Taking $r\to \infty$ we obtain \eqref{eq:bounded f} with $f(x)=e^{b|x|}$.  
\medskip
\text{}\\
{\bf Step 3.} Now we explain how to remove the additional assumption \eqref{eq:ass-binding}. This can be done by following the strategy in \cite{LieSei-06}, which we recall quickly below for the reader's convenience. 

By choosing trial states $u^{\otimes N}$, we get the upper bound 
$$E(N,N)\le C_0 N$$
for some constant $C_0 > 2e_{\rm GP}$. For every $N\in \mathbb{N}$, we denote by $M=M(N)$ the largest integer $\le N$ such that 
\begin{equation}\label{eq:binding M}
E(M(N),N)-E(M(N)-1,N)\le C_0. 
\end{equation}
Then by the choice of $M(N)$ we obtain 
\begin{align} \label{eq:M-N-1} E(N,N)-E(M(N),N)\ge (N-M(N))C_0.
\end{align}
We can find a subsequence $N_j\to \infty$ such that $M(N_j)/N_j\to \lambda\in [0,1]$. Since~\eqref{eq:binding M} holds with $M = M(N_j)$ we can apply the result in Step 2 with $w$ replaced by $\lambda w$ and find that
\begin{align} \label{eq:M-N-2}
\liminf_{j\to \infty} \frac{E(M(N_j),N_j)}{M_j} \ge e_{\rm GP}(\lambda a) \ge \lambda e_{\rm GP}(a).
\end{align}
Here $e_{\rm GP}(\lambda a)$ is the Gross-Pitaevskii energy with $a$ replaced by $\lambda a$ and the last inequality in \eqref{eq:M-N-2} is obtained by simply ignoring part of the one-body energy in the corresponding Gross-Pitaevskii functional. From \eqref{eq:M-N-1} and \eqref{eq:M-N-2}, it follows that
\begin{align*}
e_{\rm GP}(a) \ge \liminf_{j\to \infty}\frac{E(N_j,N_j)}{N_j} &\ge \liminf_{j\to \infty} \left( \frac{E(M(N_j),N_j)}{N_j}+ C_0\frac{N_j - M(N_j)}{N_j} \right) \\
& \ge \lambda^2 e_{\rm GP}(a)+ C_0 (1-\lambda)\,.
\end{align*}
Since 
$$e_{\rm GP}(a)\le \lambda^2 e_{\rm GP}(a)+ 2(1-\lambda)e_{\rm GP}(a)$$
and $C_0> 2e_{\rm GP}(a)$, we must have $\lambda=1$. Thus $M(N)/N\to 1$ for the whole sequence and  
$$ \liminf_{N\to \infty} \frac{E(N,N)}{N} =\liminf_{j\to \infty}\frac{E(N_j,N_j)}{N_j} \ge e_{\rm GP}(a).$$
This completes the proof of the energy convergence \eqref{eq:cv-energy}.
\end{proof}

\subsection{Convergence of density matrices} \label{sec:GS}

Now we prove the convergence of ground states in \eqref{eq:cv-state} by means of the Hellmann-Feynman principle. For  $v\in L^2(\R^3)$ and $\ell\in \mathbb{N}$ we will perturb $H_N$ by 
$$
S_{v,\ell}:=\frac{\ell!}{N^{\ell-1}}\sum_{1\le i_1<...<i_\ell\le N} |v^{\otimes \ell}\rangle \langle v^{\otimes \ell}|_{i_1,...,i_\ell}.
$$
Here $|v^{\otimes \ell}\rangle \langle v^{\otimes \ell}|_{i_1,...,i_\ell}$ denotes the operator $|v^{\otimes \ell}\rangle \langle v^{\otimes \ell}|$ acting on the $\ell$-body Hilbert space of the $i_1$-th,..., $i_\ell$-th variables. We have the following extension of  \eqref{eq:cv-energy}.

\begin{lemma}[\textbf{Energy lower bound for perturbed Hamiltonians}] \label{lem:lower-Hvk}\mbox{}\\
We assume \eqref{eq:assumption-A}, \eqref{eq:assumption-V} and \eqref{eq:assumption-w}. For every $v\in L^2(\R^3)$ and $\ell\in \mathbb{N}$, we have
\begin{align} \label{eq:lower-Hvk}
\liminf_{N\to \infty}\frac{\inf \sigma(H_N- S_{v,\ell})}{N}   \ge \inf_{\|u\|_{L^2}=1} \left( \cE_{\rm GP}(u)- |\langle v, u\rangle|^{2\ell}  \right).
\end{align}  

\end{lemma}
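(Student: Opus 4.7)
The plan is to replicate the four-step strategy underlying Theorem \ref{thm:cv-GP} for the perturbed Hamiltonian $H_N - S_{v,\ell}$. Assume \eqref{eq:assumption-A-V} first. Since $S_{v,\ell}\ge 0$ and, by the triangle inequality, $\|S_{v,\ell}\|\le \frac{\ell!}{N^{\ell-1}}\binom{N}{\ell}\|v\|^{2\ell}\le N\|v\|^{2\ell}$ on $\gH^N$, Dyson's Lemma \ref{lem:Dyson} yields the trivial consequence
\begin{equation*}
H_N - S_{v,\ell}\ge \wH_N - S_{v,\ell} + N\kappa_{\eps,s} - CN^2R^2s^5/\eps.
\end{equation*}
Since $\wh$ has compact resolvent and $S_{v,\ell}$ is bounded, $\wH_N - S_{v,\ell}$ admits a ground state $\wPsi_N$ with eigenvalue $e_N = O(N)$, and the task reduces to bounding $\liminf_{N\to\infty} e_N/N$ from below.

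The main obstacle is that $\wPsi_N$ is not an eigenstate of $\wH_N$ itself, so the moment bounds on $\wH_N$ that came for free in the proof of Proposition \ref{pro:MF} now require a dedicated argument. The second moment $\langle \wPsi_N,\wH_N^2\wPsi_N\rangle\le CN^2$ follows at once from the operator Cauchy--Schwarz inequality $\wH_N^2\le 2(\wH_N-S_{v,\ell})^2+2S_{v,\ell}^2$ combined with $\|S_{v,\ell}\|\le CN$, and Lemma \ref{lem:a-priori-estimate} then gives $\langle \wPsi_N,\wh_1\wh_2\wPsi_N\rangle\le C_{\eps,s}$. The fourth moment $\langle \wPsi_N,(\wH_N)^4\wPsi_N\rangle\le C_{\eps,s}N^4$ required by Lemma \ref{lem:3-body} is the genuinely new technical point: writing $\wH_N = P+Q$ with $P := \wH_N - S_{v,\ell}$ and $Q := S_{v,\ell}$, I would expand $\langle\wPsi_N,(P+Q)^4\wPsi_N\rangle$ word by word in $P,Q$, using $P^k\wPsi_N = e_N^k\wPsi_N$ to eliminate any $P$ at the extremities of a word and $\|Q\|\le CN$ to bound factors of $Q$; the resulting interior commutators are controlled via the rank-one structure of $|v^{\otimes\ell}\rangle\langle v^{\otimes\ell}|$, first for $v$ in a dense subspace of $L^2$ where the one-body part $[\sum_j \wh_j, S_{v,\ell}]$ and the genuinely two-body part $[W_N, S_{v,\ell}]$ admit direct estimates exploiting $\wh_1 \wh_2$--bounds and the first-moment bound on $W_N$, and then by continuity in $v$ to extend \eqref{eq:lower-Hvk} to the general case.

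Granted these moment bounds, Lemma \ref{lem:3-body} reduces $\wH_N - S_{v,\ell}$ from below to $K_N - S_{v,\ell}$ up to an error $C_{\eps,s}R^2 N = o(1)$ in the regime $N^{-1/2}\gg R\gg N^{-2/3}$, and along a subsequence $\gamma^{(k)}_{\wPsi_N}\to \int|u^{\otimes k}\rangle\langle u^{\otimes k}|\,d\wmu(u)$ in trace norm by Theorem \ref{thm:DeFinetti}. The mean-field analysis of the proof of Proposition \ref{pro:MF} applies verbatim to the two-body part, yielding
\begin{equation*}
\liminf_N\frac{\langle\wPsi_N,K_N\wPsi_N\rangle}{N}\ge \int \bigl(\langle u,\wh u\rangle+(1-\eps)^2 4\pi a\|u\|_{L^4}^4\bigr)\,d\wmu(u),
\end{equation*}
while trace-norm convergence of $\gamma^{(\ell)}_{\wPsi_N}$ against the fixed bounded operator $|v^{\otimes\ell}\rangle\langle v^{\otimes\ell}|$ delivers $N^{-1}\langle\wPsi_N,S_{v,\ell}\wPsi_N\rangle\to \int|\langle v,u\rangle|^{2\ell}\,d\wmu(u)$. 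Combining these and passing $N\to\infty$, $s\to\infty$, $\eps\to 0$ exactly as in Step 1 of Section \ref{sec:GP-energy} yields \eqref{eq:lower-Hvk} under \eqref{eq:assumption-A-V}; the binding-type argument of Steps 2--3 of Section \ref{sec:GP-energy} finally removes the latter assumption, unchanged in substance since $S_{v,\ell}$ is a uniformly bounded perturbation that does not disturb the IMS-type localization bound $\langle \Psi_N,e^{2b|x_1|}\Psi_N\rangle\le C$ on approximate minimizers of $H_N-S_{v,\ell}$.
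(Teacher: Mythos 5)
Your overall route is the paper's: apply Dyson's lemma to $H_N-S_{v,\ell}$, pass to a ground state $\Phi_N$ of $\wH_N-S_{v,\ell}$, transfer the a priori bounds to $\Phi_N$ (your derivation of the second-moment bound from $(\wH_N)^2\le 2(\wH_N-S_{v,\ell})^2+2S_{v,\ell}^2$, $\|S_{v,\ell}\|\le N\|v\|^{2\ell}$ and Lemma \ref{lem:a-priori-estimate} is exactly right), remove the cut-off via Lemma \ref{lem:3-body}, run the de Finetti/mean-field argument of Proposition \ref{pro:MF} with the extra term handled by trace-norm convergence of $\gamma^{(\ell)}_{\Phi_N}$ tested against the bounded observable $|v^{\otimes\ell}\rangle\langle v^{\otimes\ell}|$, take $N\to\infty$, $s\to\infty$, $\eps\to0$, and finally remove \eqref{eq:assumption-A-V} by the binding argument plus a restriction to nice $v$ and density. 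All of this matches the paper.

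The one place you depart is the fourth-moment bound $\langle\Phi_N,(\wH_N)^4\Phi_N\rangle\le (C_{\eps,s}N)^4$, and there your sketch has a gap. The paper transfers \eqref{eq:wPsi-Hk} to $\Phi_N$ directly from the eigenvalue equation and $0\le S_{v,\ell}\le \|v\|^{2\ell}N$; it introduces no commutator expansion and no density argument at this stage. In your word expansion the dangerous words are those with $S_{v,\ell}$ at the extremities, e.g. $\langle\Phi_N, S_{v,\ell}(\wH_N-S_{v,\ell})^2 S_{v,\ell}\Phi_N\rangle=\|(\wH_N-S_{v,\ell})S_{v,\ell}\Phi_N\|^2$, which after using the eigenvalue equation reduces to controlling $N^{-1}\|[W_N,S_{v,\ell}]\Phi_N\|$ (the one-body commutator is harmless for nice $v$). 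The tools you name do not reach the required order: the target is $\|(\wH_N)^2\Phi_N\|\le CN^2$, whereas $\|W_N\|\le CNR^{-3}$ and the first moment $\langle\Phi_N,W_N\Phi_N\rangle\le CN^2$ give at best $N^{-1}\|S_{v,\ell}W_N\Phi_N\|\le CN^{1/2}\,\|W_N\Phi_N\|\le CN^{3/2}R^{-3/2}N^{1/2}\cdot N^{-1/2}= CN^{3/2}R^{-3/2}$, which is $\gg N^2$ throughout the window $N^{-1/2}\gg R\gg N^{-2/3}$ (and the other half, $\|W_N S_{v,\ell}\Phi_N\|$, is even less accessible). So as written this step does not close, and it is heavier than necessary. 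If you prefer an argument resting only on the eigenvalue equation, the efficient fix is not to commute $S_{v,\ell}$ through $W_N$ but to rerun the proof of Lemma \ref{lem:3-body} with $\wH_N$ replaced by $\wH_N-S_{v,\ell}$ in the Cauchy--Schwarz step \eqref{eq:3-body-d}: then only $\langle\Phi_N,(\wH_N-S_{v,\ell})\wh_1\wh_2(\wH_N-S_{v,\ell})\Phi_N\rangle=e_N^2\langle\Phi_N,\wh_1\wh_2\Phi_N\rangle$ appears, and the leftover term produced by writing $\sum_{k\ge3}\wh_k=(\wH_N-S_{v,\ell})+S_{v,\ell}-\wh_1-\wh_2-(1-\eps)^2N^{-1}W_N$ in \eqref{eq:3-body-a} is controlled through $\langle S_{v,\ell}\Phi_N,\wh_1\wh_2 S_{v,\ell}\Phi_N\rangle\le C_v N^2$, which for $v$ in the form domain of $\wh$ follows from the rank-one structure of $S_{v,\ell}$ and your second-moment bound; one then extends to all $v\in L^2$ by density, as you propose. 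Finally, your claim that removing \eqref{eq:assumption-A-V} is ``unchanged in substance'' because $S_{v,\ell}$ is uniformly bounded misses the point the paper flags there: the binding argument multiplies by $f^2=e^{2b|x|}$, and $f^2|v\rangle\langle v|=|f^2v\rangle\langle v|$ is unbounded for general $v\in L^2$, so one must first take $v$ compactly supported and recover the general case by density --- compatible with your scheme, but it has to be invoked at that step, not only for the moment bounds.
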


\begin{proof} 
We first work under the extra condition \eqref{eq:assumption-A-V}, and then explain how to remove it at the end. Let $1>\eps>0$ and $s>0$ and 
$$N^{-1/2}\gg R=R(N) \gg N^{-2/3}.$$
Recall that from \eqref{eq:Dyson-lemma} we have
\begin{align}\label{eq:Dyson-Phi}
H_N - S_{v,\ell} \ge \wH_N - S_{v,\ell} + N\kappa_{\eps,s} -  C_{\eps,s} N R^2.
\end{align}
Let $\Phi_N$ be a ground state for $\wH_N - S_{v,\ell}$. Since $\|S_{v,\ell}\|/N$ is bounded uniformly in $N$, \eqref{eq:wPsi-Hk} still holds true with $\wPsi_N$ replaced by $\Phi_N$, namely 
\begin{align} \label{eq:higher-moment-PhiN} \langle \Phi_N, (\wH_N)^k \Phi_N \rangle \le (C_{\eps,s} N)^k
\end{align}
for all $k\in \mathbb{N}$. Combining \eqref{eq:higher-moment-PhiN} with the three-body estimate in Lemma \ref{lem:3-body} we get the following analogue of \eqref{eq:cut-off-remove-b} 
\begin{align}\label{eq:cut-off-remove-b-Phi}
&\liminf_{N\to \infty}\frac{\inf \sigma(\wH_N-S_{v,\ell})}{N}  = \liminf_{N\to \infty}\frac{\langle \Phi_N,(\wH_N-S_{v,\ell}) \Phi_N \rangle}{N} \\
& \ge \liminf_{N\to \infty} \left( \Tr \Big(\wh \gamma_{\wPsi_N}^{(1)}\Big)+ (1-\eps)^2\Tr \Big( U_R \gamma_{\wPsi_N}^{(2)}\Big) - \Tr \Big( |v^{\otimes \ell}\rangle \langle v^{\otimes \ell}| \gamma_{\Phi_N}^{(\ell)}\Big) \right). \nn
\end{align}
Moreover, \eqref{eq:higher-moment-PhiN} and the second moment estimate \eqref{eq:h1h2} imply the a-priori estimate $ \langle \Phi_N, \wh_1 \wh_2 \Psi_N \rangle \le C_{\eps,s}$. Therefore, we can estimate the right side of \eqref{eq:cut-off-remove-b-Phi} by proceeding exactly as in the proof of Proposition~\ref{pro:MF}. More precisely, by the quantum de Finetti Theorem \ref{thm:DeFinetti}, we can find a Borel probability measure $\mu_\Phi$ on the unit sphere $S\gH$ such that, up a subsequence,
\begin{align*} %\label{eq:deF-Phi}
\lim_{N\to \infty} \Tr \left| \gamma_{\Phi_N}^{(k)} - \int |u^{\otimes k}\rangle \langle u^{\otimes k}| d\mu_\Phi(u)\right| =0,\quad \forall k\in \mathbb{N}.
\end{align*}
Using \eqref{eq:lower-deF} with $\wPsi_N$ replaced by $\Phi_N$ and employing the fact that $|v^{\otimes \ell}\rangle \langle v^{\otimes \ell}|$  is bounded, we obtain 
\begin{align} \label{eq:lower-deF-Phi}
&\liminf_{N\to \infty} \left( \Tr \Big(\wh \gamma_{\wPsi_N}^{(1)}\Big)+ (1-\eps)^2\Tr \Big( U_R \gamma_{\wPsi_N}^{(2)}\Big) - \Tr \Big( |v^{\otimes \ell}\rangle \langle v^{\otimes \ell}| \gamma_{\Phi_N}^{(\ell)}\Big) \right) \nn\\
&\ge \int \Big( \langle u, \wh u \rangle +  (1-\eps)^2 4\pi a \int |u|^4 - |\langle v, u \rangle|^{2\ell}\Big) d\mu_\Phi(u).
\end{align}
From \eqref{eq:Dyson-Phi}, \eqref{eq:cut-off-remove-b-Phi} and \eqref{eq:lower-deF-Phi}, it follows that
\begin{align*}
&\liminf_{N\to \infty}\frac{\inf\sigma(H_{N}-S_{v,\ell})}{N} \\
&\ge \inf_{\|u\|_{L^2}=1} \left( \langle u, \wh u \rangle +  (1-\eps)^2 4\pi a \int |u|^4 - |\langle v, u \rangle|^{2\ell} \right) + \kappa_{\eps,s}.
\end{align*}
The lower bound \eqref{eq:lower-Hvk} follows by passing to the limits $s\to 0$ and then $\eps\to 0$ as in the proof of \eqref{eq:Enls-ke}.

To remove the assumption \eqref{eq:assumption-A-V} we may use the argument in Subsection \ref{sec:GP-energy}. The only extra difficulty is that when dealing with the analogue of \eqref{eq:GSE-binding} with $H_{N,N}$ replaced by $H_{N,N}-S_{v,\ell}$, we have to take care of the operator $f^2 |v \rangle \langle v|= |f^2 v \rangle \langle v|$ which may be unbounded as $f(x)=e^{b|x|}$ with $b>0$ and $v$ is merely in $L^2(\R^3)$. However, we can still proceed with all functions $v$ in $L^2(\R^3)$ which have compact support. Then after obtaining the lower bound \eqref{eq:lower-Hvk} with those nice functions $v$, we can extend the lower bound to all functions $v$ in $L^2(\R^3)$ by a standard density argument. 
\end{proof}

Now we are able to prove the convergence of density matrices.

\begin{proof}[Proof of state convergence \eqref{eq:cv-state}] Let $\Psi_N$ be an approximate ground state for $H_N$ as in Theorem \ref{thm:cv-GP}. For every $v\in L^2(\R^3)$ and $\ell\in \mathbb{N}$, from the upper bound in \eqref{eq:cv-energy} and the lower bound in Lemma \ref{lem:lower-Hvk} we have
\begin{align*}
&\limsup_{N\to \infty}\Tr \Big(|v^{\otimes \ell}\rangle \langle v^{\otimes \ell}| \gamma_{\Psi_N}^{(\ell)} \Big) \\ &= \limsup_{N\to \infty} \left(  \frac{\langle \Psi_N, H_N \Psi_N \rangle}{N} - \frac{\langle \Psi_N, (H_N-S_{v,\ell}) \Psi_N \rangle}{N} \right) \\
& \le \limsup_{N\to \infty} \left( \frac{\inf\sigma(H_N)}{N}-\frac{\inf\sigma(H_N-S_{v,\ell})}{N} \right) \\
&\le e_{\rm GP} - \inf_{\|u\|_{L^2}=1} \left( \cE_{\rm GP}(u) - |\langle v, u\rangle|^{2\ell}  \right).
\end{align*}
Here $v$ is not necessarily normalized. Therefore, we can replace $v$ by $\lambda^{1/(2\ell)} v$ with $\lambda>0$ and obtain
\begin{align} \label{eq:HF-1}
\limsup_{N\to \infty}\Tr \Big(|v^{\otimes \ell}\rangle \langle v^{\otimes \ell}| \gamma_{\Psi_N}^{(\ell)} \Big) \le \frac{1}{\lambda}\left( e_{\rm GP} - \inf_{\|u\|_{L^2}=1} \left( \cE_{\rm GP}(u) - \lambda |\langle v, u\rangle|^{2\ell}  \right)\right).
\end{align}
With given $v$ and $\ell$, for every $\lambda>0$ let $u_\lambda$ be a (normalized) minimizer for $u\mapsto \cE_{\rm GP}(u) - \lambda |\langle v, u\rangle|^{2\ell}$. Since $\langle u_\lambda, h u_\lambda \rangle$ is bounded and $h$ has compact resolvent, there exists a subsequence $\lambda_j\to 0$ such that $u_{\lambda_j}$ converges to $u_0$ in $L^2$. By Fatou's lemma, $u_0$ is a minimizer of $\cE_{\rm GP}(u)$. Moreover,
\begin{align} \label{eq:HF-2}
& \limsup_{j\to \infty }\frac{1}{\lambda_j}\left( e_{\rm GP} - \inf_{\|u\|_{L^2}=1} \left( \cE_{\rm GP}(u) - \lambda_j |\langle v, u\rangle|^{2\ell}  \right)\right) \nn\\
& \le \limsup_{j\to \infty }\frac{1}{\lambda_j} \left( \cE_{\rm GP}(u_{\lambda_j}) - \left( \cE_{\rm GP}(u_{\lambda_j})- \lambda_j |\langle v, u_{\lambda_j}\rangle|^{2\ell}  \right)\right) = |\langle v, u_0\rangle|^{2\ell} .
\end{align}
From \eqref{eq:HF-1} and \eqref{eq:HF-2}, we conclude that for every $v\in L^2(\R^3)$ and $\ell\in \mathbb{N}$,
\begin{align} \label{eq:HF-3}
\limsup_{N\to \infty}\Tr \Big(|v^{\otimes \ell}\rangle \langle v^{\otimes \ell}| \gamma_{\Psi_N}^{(\ell)} \Big) \le \sup_{u\in \cM_{\rm GP}} |\langle v, u\rangle|^{2\ell} 
\end{align}
where $\cM_{\rm GP}$ is the set of minimizers of $\cE_{\rm GP}(u)$. 

Note that also in \cite{LieSei-06} the upper bound \eqref{eq:HF-3} with $\ell=1$ was proved,  and from it the convergence of the one-particle density matrices was deduced using an abstract argument of convex analysis. In the following, we will provide a simpler way to conclude the convergence of density matrices from \eqref{eq:HF-3}, using the quantum de Finetti Theorem \ref{thm:DeFinetti}. Indeed, by Theorem \ref{thm:DeFinetti} as before, up to a subsequence of $\Psi_N$, there exists a Borel probability measure $\mu$ on the unit sphere $S\gH$ such that
\begin{align} \label{eq:HF-4 pre}
\lim_{N\to \infty} \Tr \left| \gamma_{\Psi_N}^{(k)} - \int |u^{\otimes k}\rangle \langle u^{\otimes k}| d\mu(u)\right| =0,\quad \forall k\in \mathbb{N}.
\end{align}
We will show that $\mu$ is supported on $\cM_{\rm GP}$. From \eqref{eq:HF-3} and \eqref{eq:HF-4 pre}, we get
\begin{align} \label{eq:HF-4}
\int |\langle  v,u\rangle|^{2k} d\mu(u) \le \sup_{u\in \cM_{\rm GP}} |\langle v, u\rangle|^{2k},\quad \forall v\in L^2(\R^3), k\in \mathbb{N}.
\end{align}

We assume for contradiction that there exists $v_0$ in the support of $\mu$ and $v_0\notin \cM_{\rm GP}$. We claim that we could then find $\delta\in (0,1/2)$ such that 
\begin{align} 
\label{eq:HF-5}|\langle v,u \rangle| \le 1-3\delta^2, \quad \forall u\in \cM_{\rm GP}, \forall v\in B
\end{align}
where $B$ is the set of all points in the support of $\mu$ within a $L^2$-distance less than $\delta$ from $v_0$. Indeed, if that were not the case, we would have two sequences strongly converging in $L ^2$ 
$$ v_n \to v_0, \quad u_n \to u_0 \in \cM_{\rm GP}$$
with $\norm{u_n - v_n} \to 0$, and thus $v_0 \in \cM_{\rm GP}$. Here we have used that $\cM_{\rm GP}$ is a compact subset of $L^2(\R^3)$.

On the other hand, by the triangle inequality,
\begin{align}
\label{eq:HF-6}|\langle v,u \rangle| \ge \frac{\|u\|^2+\|v\|^2-\|u-v\|^2}{2} \ge 1- 2\delta^2, \quad \forall u,v \in B.
\end{align}
Combining \eqref{eq:HF-4}, \eqref{eq:HF-5} and \eqref{eq:HF-6} we find that
\begin{align} \label{eq:HF-7}
(\mu(B))^2 (1-2\delta^2)^{2k} &\le \int_B\int_B |\langle  v,u\rangle|^{2k} d\mu(u) d\mu(v) \nn\\
&\le \int_B \sup_{u\in \cM_{\rm GP}} |\langle v, u\rangle|^{2k} d\mu(v) \le \mu(B) (1-3\delta^2)^{2k} 
\end{align}
for all $k\in \mathbb{N}$ and hence, taking $k\to \infty$,  $\mu(B)=0$. However, it is a contradiction to the fact that $v_0$ belongs to the support of $\mu$ and $\mu$ is a Borel measure. Thus we conclude that $\mu$ is supported on $\cM_{\rm GP}$ and the proof is complete. 
\end{proof}

%%%%%%%%%%%%%%%%%%%%%%%%%%%%%%%%%%%%%%%%%%
%%%%%%%%%%%%%%%%%%%%%%%%%%%%%%%%%%%%%%%%%%
\bibliographystyle{siam}

\begin{thebibliography}{}

\end{thebibliography}


\begin{thebibliography}{29}
%\bibliography{biblio}


\bibitem{BenOliSch-12}
{\sc N.~{Benedikter}, G.~{de Oliveira}, and B.~{Schlein}}, {\em {Quantitative
  Derivation of the Gross-Pitaevskii Equation}}, Comm. Pure App. Math. 68 (8) (2015), pp.~1399--1482.

\bibitem{BenPorSch-15}
{\sc N.~{Benedikter}, M.~{Porta}, and B.~{Schlein}}, {\em {Effective Evolution
  Equations from Quantum Dynamics}}, preprint arXiv:1502.02498 (2015).

\bibitem{Bhatia}
{\sc R.~Bhatia}, {\em Matrix analysis}, vol.~169, Springer, 1997.

\bibitem{Chen}
{\sc T. Chen, C. Hainzl, N. Pavlovi{\'c}, and R. Seiringer}, {\em
Unconditional uniqueness for the cubic Gross-Pitaevskii hierarchy via
 quantum de Finetti}, Commun. Pure Appl. Math. 68 (10) (2015), pp.~1845--1884. 

\bibitem{Chiribella-11}
{\sc G.~Chiribella}, {\em On quantum estimation, quantum cloning and finite
  quantum de {F}inetti theorems}, in Theory of Quantum Computation,
  Communication, and Cryptography, vol.~6519 of Lecture Notes in Computer
  Science, Springer, 2011.

\bibitem{ChrKonMitRen-07}
{\sc M.~Christandl, R.~K{\"o}nig, G.~Mitchison, and R.~Renner}, {\em
  One-and-a-half quantum de {F}inetti theorems}, Comm. Math. Phys., 273 (2007),
  pp.~473--498.

\bibitem{Dyson-57}
{\sc F.~J. Dyson}, {\em Ground-state energy of a hard-sphere gas}, Phys. Rev.,
  106 (1957), pp.~20--26.

\bibitem{ErdSchYau-07}
{\sc L.~Erd{\"{o}}s, B.~Schlein, and H.-T. Yau}, {\em Derivation of the cubic
  non-linear {S}chr\"odinger equation from quantum dynamics of many-body
  systems}, Invent. Math., 167 (2007), pp.~515--614.

\bibitem{ErdSchYau-09}
{\sc L.~Erd{\H{o}}s, B.~Schlein, and H.-T. Yau}, {\em Rigorous derivation of
  the {G}ross-{P}itaevskii equation with a large interaction potential}, J.
  Amer. Math. Soc., 22 (2009), pp.~1099--1156.

\bibitem{ErdYauSch-10}
\leavevmode\vrule height 2pt depth -1.6pt width 23pt, {\em Derivation of the
  {G}ross-{P}itaevskii equation for the dynamics of {B}ose-{E}instein
  condensate}, Ann. of Math. (2), 172 (2010), pp.~291--370.

\bibitem{ErdYau-01}
{\sc L.~Erd{\"o}s and H.-T. Yau}, {\em Derivation of the nonlinear
  {S}chr\"odinger equation from a many body {C}oulomb system}, Adv. Theor.
  Math. Phys., 5 (2001), pp.~1169--1205.

\bibitem{HudMoo-75}
{\sc R.~L. Hudson and G.~R. Moody}, {\em Locally normal symmetric states and an
  analogue of de {F}inetti's theorem}, Z. Wahrscheinlichkeitstheor. und Verw.
  Gebiete, 33 (1975/76), pp.~343--351.

\bibitem{LewNamRou-14}
{\sc M.~Lewin, P.~T. Nam, and N.~Rougerie}, {\em Derivation of {H}artree's
  theory for generic mean-field {B}ose gases}, Adv. Math., 254 (2014),
  pp.~570--621.

\bibitem{LewNamRou-14c}
\leavevmode\vrule height 2pt depth -1.6pt width 23pt, {\em The mean-field
  approximation and the non-linear {S}chr\"odinger functional for trapped
  {B}ose gases}, Trans. Amer. Math. Soc, in press. Preprint arXiv:1405.3220 (2014).

\bibitem{LewNamRou-14b}
\leavevmode\vrule height 2pt depth -1.6pt width 23pt, {\em Remarks on the
  quantum de {F}inetti theorem for bosonic systems}, Appl. Math. Res. Express
  (AMRX) 1 (2015), pp.~ 48--63.


\bibitem{Lieb-84}
{\sc E.~H. Lieb}, {\em Bound on the maximum negative ionization of atoms and
  molecules}, Phys. Rev. A, 29 (1984), pp.~3018--3028.

\bibitem{LieSei-02}
{\sc E.~H. Lieb and R.~Seiringer}, {\em {Proof of Bose-Einstein Condensation
  for Dilute Trapped Gases}}, Phys. Rev. Lett., 88 (2002), p.~170409.

\bibitem{LieSei-06}
\leavevmode\vrule height 2pt depth -1.6pt width 23pt, {\em Derivation of the
  {G}ross-{P}itaevskii equation for rotating {B}ose gases}, Commun. Math.
  Phys., 264 (2006), pp.~505--537.

\bibitem{LieSeiSol-05}
{\sc E.~H. Lieb, R.~Seiringer, and J.~P. Solovej}, {\em Ground-state energy of
  the low-density {F}ermi gas}, Phys. Rev. A, 71 (2005), p.~053605.

\bibitem{LieSeiSolYng-05}
{\sc E.~H. Lieb, R.~Seiringer, J.~P. Solovej, and J.~Yngvason}, {\em The
  mathematics of the {B}ose gas and its condensation}, Oberwolfach {S}eminars,
  Birkh{\"a}user, 2005.

\bibitem{LieSeiYng-00}
{\sc E.~H. Lieb, R.~Seiringer, and J.~Yngvason}, {\em Bosons in a trap: A
  rigorous derivation of the {G}ross-{P}itaevskii energy functional}, Phys.
  Rev. A, 61 (2000), p.~043602.

\bibitem{LieSeiYng-05}
{\sc E.~H. Lieb, R.~Seiringer, and J.~Yngvason}, {\em {Justification of
  $c$-Number Substitutions in Bosonic Hamiltonians}}, Phys. Rev. Lett., 94
  (2005), p.~080401.

\bibitem{LieYng-98}
{\sc E.~Lieb and J.~Yngvason}, {\em Ground state energy of the low density bose
  gas}, Phys. Rev. Lett., 80 (1998), pp.~2504--2507.


\bibitem{Pickl-15}
{\sc P.~Pickl}, {\em Derivation of the time dependent {G}ross {P}itaevskii
  equation with external fields}, Rev. Math. Phys., 27 (2015), p.~1550003.

\bibitem{rougerie-cdf}
{\sc N.~Rougerie}, {\em De Finetti theorems, mean-field limits and Bose-Einstein condensation}, preprint  arXiv:1506.05263, 
\newblock Lecture notes, 2015.

\bibitem{Sei-03}
{\sc R.~Seiringer}, {\em Ground state asymptotics of a dilute, rotating gas},
  J. Phys. A, 36 (2003), pp.~9755--9778.

\bibitem{Stormer-69}
{\sc E.~St{\o}rmer}, {\em Symmetric states of infinite tensor products of
  {$C^{\ast} $}-algebras}, J. Functional Analysis, 3 (1969), pp.~48--68.

\end{thebibliography}

\end{document}